\documentclass[12pt,a4paper]{article}

\usepackage[a4paper,margin=1in]{geometry}
\usepackage{amsmath,amssymb,amsthm,mathtools,bm}
\usepackage{graphicx}
\usepackage[numbers,sort&compress]{natbib}
\usepackage{hyperref}
\usepackage{xcolor}
\newcommand{\dd}{\,d}
\newcommand{\ii}{\mathrm{i}}
\newcommand{\e}{\mathrm{e}}
\newcommand{\calM}{\mathcal{M}}

\newtheorem{theorem}{Theorem}[section]
\newtheorem{proposition}[theorem]{Proposition}
\newtheorem{remark}[theorem]{Remark}

\numberwithin{equation}{section}

\usepackage{authblk}
\usepackage{tikz}
\usetikzlibrary{calc,arrows.meta}

\begin{document}

\title{Perturbation Theory for Time-dependent Singular Quantum Systems}

\author[1]{Kaya G\"{u}ven Akba\c{s}}
\author[2]{Fatih Erman}
\author[3]{O. Teoman Turgut}
\affil[1, 3]{Department of Physics, Bo\u{g}azi\c{c}i University, Bebek, 34342, \.{I}stanbul, Turkey}
\affil[2]{Department of Mathematics, \.{I}zmir Institute of Technology, Urla, 35430, \.{I}zmir, Turkey}
\affil[1]{haci.akbas1989@gmail.com}
\affil[2]{fatih.erman@gmail.com}
\affil[3]{turgutte@boun.edu.tr}

\date{}
\maketitle

\begin{abstract}
We develop a perturbative framework for time-dependent point interactions in quantum systems with purely discrete unperturbed spectrum.  In two and three dimensions the point interaction is described through the renormalized resolvent obtained from heat-kernel regularization, while in one dimension the diagonal Green function is finite and no renormalization is required.  Time dependence is introduced either through the interaction parameter or through the motion of the support point.  In both cases we expand the non-autonomous Hamiltonian in the spectral basis of the corresponding static point-interaction problem and derive the first- and second-order pole shifts, projection corrections, and transition amplitudes.  The method is illustrated by explicit examples: point interactions with time-dependent coupling, harmonic oscillators perturbed by moving point interactions, and a particle on a sphere with a moving interaction center.  We also discuss the one-dimensional harmonic oscillator with a moving delta potential, showing that in the non-renormalized case the present formulation reduces to the standard time-dependent perturbation theory.
\end{abstract}

\section{Introduction}

Point interactions form a well-established class of exactly solvable singular perturbations of Schr\"odinger operators \cite{AlbeverioGesztesyHoeghKrohnHolden2005}.  In one dimension the formal expression
\begin{equation}
    H=H_0-\alpha\,\delta(\cdot,a) \label{Hamiltonian}
\end{equation}
can be interpreted directly, either through the usual matching conditions at the support point or through a finite rank-one resolvent formula.  In two and three dimensions, however, the same expression requires renormalization: the diagonal Green function diverges and the coupling constant must be replaced by a scale-dependent parameter.  This difference is important for time-dependent perturbation theory.  In the renormalized case the natural variables are the principal function and the poles of the resolvent, while in one dimension the physical coupling strength itself may be varied without introducing a subtraction scale.

Time-dependent point interactions have been studied in several settings, including non-autonomous three-dimensional point interactions \cite{DellAntonioFigariTeta1996}, moving interaction centers \cite{DellAntonioFigariTeta2000,Posilicano2007}, ionization under time-dependent driving \cite{CorreggiDellAntonioFigariMantile2005}, one-dimensional time-dependent delta interactions \cite{HmidiMantileNier2010}, and time-dependent two-dimensional renormalized point interactions \cite{CarloneCorreggiFigari2017,BorrelliCarloneTentarelli2022}.  Explicit propagators for time-dependent delta-type interactions have also been obtained in Refs.~\cite{DodonovMankoNikonov1992,ErmanGadellaUncu2020,MudraChakraborty2021}.  The purpose of the present work is complementary.  We assume that the unperturbed Hamiltonian has purely discrete spectrum and formulate a perturbation theory directly in terms of the renormalized spectral data of the static point-interaction Hamiltonian.

To keep the notation transparent, the main formulas are written for simple eigenvalues and for support points not lying on the relevant nodal sets.  
We shall present the non - degenerate case for clarity of the presentation. However, the idea is generally true and worked out in some illustrative examples in Section \ref{sec:examples}.

The paper is organized as follows. Section \ref{sec:static} recalls the static
renormalized point-interaction Hamiltonian and introduces the spectral
notation used throughout the paper. Section \ref{sec:strength}  develops
the perturbative expansion when the renormalized interaction parameter is
allowed to vary in time. Section \ref{sec:moving} treats the case in which the strength is kept
fixed but the support point of the interaction moves. Section~5 presents
explicit applications of the general formalism, including point interactions
with time-dependent coupling constants, the two-dimensional isotropic
harmonic oscillator with a moving renormalized delta center, a particle on a
sphere with a moving delta center, and the one-dimensional harmonic
oscillator with a moving delta potential, where no renormalization is
required.

\section{Static Renormalized Point Interaction}
\label{sec:static}

If $H_0$ represents the free Hamiltonian of a particle moving intrinsically on a compact two or three dimensional manifold $\mathcal{M}$, $H_0=-\frac{\hbar^2}{2m} \nabla^{2}_g$, where $\nabla^{2}_g$ is the Laplace-Beltrami operator with the metric structure $g$, given by
\begin{eqnarray}
    (\nabla_{g}^2\psi)(x)= \frac{1}{\sqrt{\det g}} \sum_{i,j=1}^{d} \frac{\partial}{\partial x^i} \left( \sqrt{\det g} g^{ij} \frac{\partial \psi(x)}{\partial x^j}\right)   \;,
\end{eqnarray}
given in some local coordinate system $x= (x^1, \ldots, x^d)$, and $g^{ij}$ are the components of the inverse of the metric $g$, and $d$ is the dimension of the manifold. It is well known 
\cite{Rosenberg, chavel2} that there exists a complete orthonormal system of $C^\infty$
eigenfunctions $\{\phi_k\}_{k=0}^{\infty}$ in $L^2(\mathcal{M})$ and
the spectrum consists of only its eigenvalues 
\begin{eqnarray}
H_0 \phi_k =E_k \phi_k \;,    
\end{eqnarray}
with $E_k$ tending
to infinity as $k \rightarrow \infty$ and each eigenvalue has
finite multiplicity. Some eigenvalues are repeated according to their multiplicity. The
multiplicity of the first eigenvalue $E_0=0$ is one and the
corresponding eigenfunction is constant. If $H_0= - \frac{\hbar^2}{2m} \nabla^2 +V$ with some potential $V$, we assume that there also exists a complete orthonormal 
system of eigenfunctions with the spectrum consisting solely of its eigenvalues. For both situations, we use the same notation $E_k$ and $\phi_k(x)$ for the eigenvalues and eigenfunctions. When degeneracies are present, we include an additional degeneracy label in
the eigenfunctions.

As is well known, a proper description of the spectrum of the perturbed Hamiltonian \(H\) requires a renormalization procedure, since divergences arise in finding the spectrum of the problem or in the construction of the resolvent, or equivalently the Green function, associated with the formal Hamiltonian above. We note, however, that alternative singularity-free descriptions of point interactions have recently been proposed in the scattering setting, notably the dynamical formulation of stationary scattering developed by Loran and Mostafazadeh, which resolves the coincidence-limit problem for multi-center delta potentials without the usual coupling-constant renormalization \cite{LoranMostafazadeh2022}.

Here we will follow the standard renormalization description of the bound state problem for the above point delta interactions \eqref{Hamiltonian}. We begin by regularizing the singular interaction \(\delta(x,a)\) with the heat kernel \(K_{\epsilon}(x,a)\), where \(\epsilon>0\) is a cutoff parameter, and then construct the corresponding Green's function \cite{AlbeverioKurasov2000},
\begin{eqnarray}
    G^{\epsilon}(x,y|E)= G_0(x,y|E) + \frac{G_0(x,a|E) G_0(a,y|E)}{\frac{1}{\alpha}-G_{0}^{\epsilon}(a,a|E)} \;,
\end{eqnarray}
Here \(G_{0}^{\epsilon}(a,a|E)\) denotes the regularized diagonal Green's function of \(H_0\), defined by \cite{ErmanTurgut2010}
\begin{eqnarray}
G_{0}^{\epsilon}(a,a|E) =    \int_{\epsilon}^{\infty}K_s(a,a) \, e^{s E} \, ds \;,
\end{eqnarray} 
for \(\mathrm{Re}(E)<0\), where \(K_s(a,a)\) is the diagonal heat kernel associated with \(H_0\). The heat kernel satisfies
\[
H_0 K_s(x,y)=\frac{\partial K_s(x,y)}{\partial s}\;,
\]
for each fixed \(y\), with \(H_0\) acting on the \(x\)-variable. The divergence of \(G(a,a|E)\) originates from the short-time asymptotics of the heat kernel \cite{DaviesHeatKernels},
\begin{eqnarray}
    K_s(a,a) \sim \frac{1}{(4 \pi s)^{D/2}} \;,
\end{eqnarray}
in dimensions \(D=2\) and \(D=3\). To eliminate this singularity, one introduces a cutoff-dependent coupling constant \(\alpha(\epsilon)\) chosen so that its \(\epsilon\)-dependence cancels 
the divergence in the diagonal Green's function. Specifically, one defines \cite{ErmanTurgut2010}
\begin{eqnarray}
    \frac{1}{\alpha(\epsilon)}:= \frac{1}{\alpha_R(M)} + \int_{\epsilon}^{\infty} K_{s}(a,a)e^{s M} ds  \;,
\end{eqnarray}
where \(M\) is an arbitrary renormalization scale. With this choice, the limit \(\epsilon\to 0^+\) exists and yields the finite renormalized Green's function
\begin{eqnarray}
    G(x,y|E)= G_0(x,y|E) + \frac{G_0(x,a|E) G_0(a,y|E)}{\frac{1}{\alpha_R(M)}+ \int_{0}^{\infty}K_s(a,a)(e^{s M}-e^{s E}) ds} \;.
\end{eqnarray}
Let us denote the expression in the denominator by 
\begin{eqnarray}
\Phi(E,M):= \frac{1}{\alpha_R(M)}+ \int_{0}^{\infty}K_s(a,a)\bigl(e^{sM}-e^{sE}\bigr)\, ds \;, 
\end{eqnarray}
which we shall refer to as the \emph{principal function}. The leading small-$s$ divergence cancels because $e^{sM}-e^{sE}=s(M-E)+O(s^2)$. The remaining arbitrariness in the formulation can be eliminated by fixing the renormalized coupling through a physical input (renormalization condition). The bound-state energies are determined by the poles of the full Green function on the real \(E\)-axis, or if there is no pole cancellation with the Green's function $G_0$, the bound state energies are equivalently determined by the real zeros of \(\Phi(E,M)\). Then, one obtains a relation among the renormalization scale \(M\), the renormalized coupling \(\alpha_R(M)\), and the bound-state energies. The dependence of \(\alpha_R\) on \(M\) is such that physical quantities, in particular the bound-state spectrum, remain independent of the arbitrary scale \(M\). We choose the renormalization scale as \(M=-\mu_0\), corresponding to a bound state lying below the ground-state energy \(E_0\) of \(H_0\). A particularly convenient choice for bound-state calculations is to impose
\[
\frac{1}{\alpha_R(M)}=0 \qquad \text{at} \qquad M=-\mu_0 \;.
\]
With this prescription, no additional unphysical parameter remains, and all other energy levels are fixed by the assigned binding energy. Although this choice is especially useful for determining bound states, it is not always optimal in other settings, for instance in perturbative expansions in \(\alpha_R\).
With this choice, we obtain
\begin{eqnarray}
    G(x,y|E)= G_0(x,y|E) + \frac{G_0(x,a|E) G_0(a,y|E)}{\Phi(E,\mu_0)} \;, \label{eq:renormalized-resolvent}
\end{eqnarray}
where
\begin{eqnarray}
\Phi(E,\mu_0)=  \int_{0}^{\infty}K_s(a,a)(e^{-s \mu_0}-e^{s E}) ds \;. \label{principalfunction}
\end{eqnarray}
Using an analytic continuation argument and the eigenfunction expansion of heat kernel of $H_0$ \cite{Rosenberg},
\begin{equation}
    K_t(x,y)=\sum_{n=0}^{\infty}\phi_n(x)\overline{\phi_n(y)}\,\e^{-tE_n} \;,
\end{equation}
we obtain the expansion of the Green function $G_0$, the kernel of $(H_0-E)^{-1}$ for $E\notin\sigma(H_0)$, as
\begin{equation}
    G_0(x,y|E)=\int_0^\infty K_t(x,y)\e^{tE}\dd t = \sum_{n=0}^{\infty}
    \frac{\phi_n(x)\overline{\phi_n(y)}}{E_n-E} \;.
\end{equation}
Occasionally, we use the heat-kernel representation of the Green function for clarity of exposition, although the representation may require analytic continuation outside its initial domain of convergence. 
Then, we have
\begin{eqnarray}
    \Phi(E,\mu_0)
    & = & \sum_{n=0}^{\infty}|\phi_n(a)|^2
    \left(\frac{1}{E_n+\mu_0}-\frac{1}{E_n-E}\right) \nonumber \\ 
    &= & -(E+\mu_0)\sum_{n=0}^{\infty}
    \frac{|\phi_n(a)|^2}{(E_n+\mu_0)(E_n-E)}     \label{eq:Phi-series} \;.
\end{eqnarray}
The convergence of these series follows from heat-kernel estimates; see Refs.~\cite{annals23,Frontiers25} for a detailed proof.

The poles of equation \eqref{eq:renormalized-resolvent} are the zeros of $\Phi(E,\mu_0)$, except when a pole of $G_0$ is cancelled by the numerator.  We denote the shifted eigenvalues by $E_k^*$.  If $\phi_k(a)\ne0$, then $E_k^*$ is the unique zero of $\Phi(\cdot,\mu_0)$ in the corresponding spectral gap.  If $\phi_k(a)=0$, then the eigenvalue $E_k$ is unchanged and the associated eigenfunction is the original $\phi_k$.

For a shifted simple pole $E_k^*$ we introduce the following notation for simplicity
\begin{equation}
    G_0^k(x):=G_0(x,a|E_k^*),\qquad
    \Phi_E^k:= \partial_E\Phi(E,\mu_0)|_{E=E_{k}^{*}} = \partial_E\Phi(E_k^*,\mu_0),
\end{equation}
and similarly for higher derivatives.  Here $\partial_E$ denotes the derivative with respect to $E$. Notice that 
\begin{eqnarray}
\begin{aligned}
    \Phi_E^k
    = & -\sum_{n=0}^{\infty}\frac{|\phi_n(a)|^2}{(E_n-E_k^*)^2} =- \partial_E G_0(a,a|E)|_{E=E_{k}^*}= - \partial_E G_0(a,a|E_{k}^*) \cr :=& - G_{0,E}^k(x) <0 \;.
\end{aligned}
\end{eqnarray}
One can also find the bound state wave function $\psi_k(x)$ associated with each simple pole $E_{k}^*$ of the renormalized Green's function from the Riesz projection onto eigenspace associated with this pole: $\psi_k(x) \overline{\psi_k(y)}= - \frac{1}{2\pi i} \oint_{\Gamma_k} G(x,y|E) dE$, where $\Gamma_k$ is the contour around $E=E_{k}^{*}$ with counterclockwise orientation.  It follows from residue theorem that 
\begin{equation}
    \psi_k(x)=\frac{G_0^k(x)}{\sqrt{-\Phi_E^k}} \;.
    \label{eq:psi-k}
\end{equation}
The associated rank-one projection has kernel
\begin{equation}
    P_k(x,y)=\psi_k(x)\overline{\psi_k(y)}
    =\frac{G_0^k(x)\overline{G_0^k(y)}}{\left(-\Phi_E^k\right)}
    =\frac{G_0(x,a|E_k^*)G_0(a,y|E_k^*)}{\left(-\Phi_E^k\right)} \;.
    \label{eq:Pk-complex-kernel}
\end{equation}
Although the formal expression $H_0-\alpha \delta(\cdot,a)$ does not define a Hamiltonian without renormalization, the renormalization procedure yields a self-adjoint operator $H$. 
As shown recently in \cite{Frontiers25}, the corresponding eigenfunctions $\{\psi_k\}$ form a complete orthonormal system. Hence the static renormalized Hamiltonian admits the spectral representation
\begin{equation}
    H=\sum_k E_k^*P_k,\qquad
    P_k=|\psi_k\rangle\langle\psi_k| \;.
    \label{eq:H-static}
\end{equation}
This identity is understood on the natural domain determined by the spectral decomposition above. 
Here, in the standard Dirac notation, \(|u\rangle\langle v|\) denotes the
rank-one operator on \(L^2(\mathcal M)\) defined by
\begin{equation}
    (|u\rangle\langle v|)f
    =
    u\,\langle v,f\rangle ,
    \qquad
    \langle v,f\rangle
    =
    \int_{\mathcal M}\overline{v(x)}f(x)\,d\mu(x) \;.
\end{equation}
When integral kernels are used, we also employ the formal position kets
\(|x\rangle\), understood in the usual distributional sense, with
\begin{equation}
    \langle x|u\rangle=u(x),
    \qquad
    \int_{\mathcal M}|x\rangle\langle x|\,d\mu(x)=I,
    \qquad
    \langle x|A|y\rangle=A(x,y) \;.
\end{equation}
We shall use this notation throughout the paper whenever the corresponding
rank-one operators and integral kernels are well defined \cite{ReedSimonI}.

In the present paper, we use the spectral representation \eqref{eq:H-static} as the starting point for time - dependent perturbation theory.

\section{Perturbation Theory for Time-dependent Renormalized Strength}
\label{sec:strength}

We next allow the physical parameter fixed by the renormalization condition
to depend on time. In the static problem the subtraction point was chosen as
\[
    M=-\mu_0,\qquad \mu_0>0,
\]
together with
\[
    \frac{1}{\alpha_R(-\mu_0)}=0.
\]
With this choice the principal function $\Phi$ is given by \eqref{principalfunction},
and the lowest pole is fixed at \(E=-\mu_0\). Thus \(\mu_0\) is not used
below as an arbitrary renormalization scale, but as the physical
bound-state parameter specifying the renormalized point interaction.

We introduce a time-dependent renormalized strength by replacing
\begin{equation}
    \mu_0 \longrightarrow \mu(t)=\mu_0+\eta(t),
    \qquad |\eta(t)|\ll1,
    \qquad \eta(0)=0 \;.
    \label{eq:mu-time}
\end{equation}
The corresponding instantaneous subtraction point is \(M(t)=-\mu(t)\).
For each fixed value of \(t\), the frozen Hamiltonian is described by the
instantaneous $\Phi$ function
\begin{equation}
    \Phi(E,\mu(t))
    =
    \int_0^\infty K_s(a,a)
    \left(e^{-s\mu(t)}-e^{s E}\right)\,ds \;.
    \label{eq:Phi-instantaneous}
\end{equation}
The zeros of this function give the instantaneous poles of the resolvent.
We denote the pole connected to the static pole \(E_k^*\) by
\begin{equation}
    E_k(t)
    =
    E_k^*+\Delta_k(t),
    \qquad
    \Delta_k(t)
    =
    \Delta_k^{(1)}(t)+\Delta_k^{(2)}(t)+\cdots.
    \label{eq:instantaneous-pole}
\end{equation}
They are determined by the zeros of the time-dependent principal function \eqref{eq:Phi-instantaneous}
\begin{equation}
    \Phi(E_k(t),\mu(t))=0 \;.
    \label{eq:pole-equation-time}
\end{equation}
Equation \eqref{eq:pole-equation-time} should not be confused with the
solution of the time-dependent Schrödinger equation. It only determines the
instantaneous spectrum of the Hamiltonian obtained by freezing the parameter
\(\mu(t)\) at a fixed time. This is analogous to the ordinary
time-dependent Hamiltonian \(H(t)=H_0+\lambda(t)V\), for which one may first
solve the instantaneous eigenvalue problem
\begin{equation}
    H(t)\psi_n(t)= E_n(t)\psi_n(t),
\end{equation}
but the actual state is still governed by
\begin{equation}
    i\hbar\frac{\partial}{\partial t}\Psi(t)=H(t)\Psi(t) \;.
\end{equation}
In the present renormalized setting the equation
\(\Phi(z,\mu(t))=0\) replaces the usual characteristic equation for the
instantaneous eigenvalues. Once the instantaneous poles and projections are
found, the dynamics is obtained from the time-dependent Schr\"odinger equation
with
\begin{equation}
    H(t)=\sum_k E_k(t)P_k(t) \;.
\end{equation}
Thus the motion of the poles describes the time-dependent instantaneous
spectrum, while transition amplitudes between the static eigenstates are
computed by time-dependent perturbation theory, as we will develop.

\begin{proposition}[Pole shifts for a time-dependent scale] \label{Proppoleshifts}
For a simple shifted eigenvalue $E_k^*$, the first- and second-order pole shifts are given by
\begin{align}
    \Delta_k^{(1)}(t)
    &=-\frac{\Phi_\mu^k}{\Phi_E^k}\,\eta(t),
    \label{eq:Delta1-mu}\\
    \Delta_k^{(2)}(t)
    &=-\frac{1}{2\Phi_E^k}
    \left[
        \Phi_{EE}^k\left(\Delta_k^{(1)}(t)\right)^2
        +\Phi_{\mu\mu}^k\,\eta(t)^2
    \right].
    \label{eq:Delta2-mu}
\end{align}
Here all derivatives are evaluated at $(E_k^*,\mu_0;a)$ and the second order pole shift can be explicitly found by substituting the first order pole shift \eqref{eq:Delta1-mu} into \eqref{eq:Delta2-mu}. Higher order pole shifts can be similarly found iteratively. 
\end{proposition}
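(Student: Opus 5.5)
The plan is to treat \eqref{eq:pole-equation-time} as an implicit equation $F(\Delta,\eta):=\Phi(E_k^*+\Delta,\mu_0+\eta)=0$ near $(\Delta,\eta)=(0,0)$. We then apply the implicit function theorem and expand in powers of $\eta$. The base point is admissible because $\Phi(E_k^*,\mu_0)=0$ by definition of $E_k^*$. The key nondegeneracy condition is $\Phi_E^k=-\sum_n|\phi_n(a)|^2/(E_n-E_k^*)^2<0$, which was noted in Section~\ref{sec:static} for a simple shifted pole.

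First I would verify that $\Phi$ is jointly real-analytic (or at least $C^2$) in $(E,\mu)$ near $(E_k^*,\mu_0)$. I would use the series \eqref{eq:Phi-series}, i.e.\ $\Phi(E,\mu)=\sum_n|\phi_n(a)|^2\bigl((E_n+\mu)^{-1}-(E_n-E)^{-1}\bigr)$. For $E$ in the open spectral gap containing $E_k^*$ and $\mu$ near $\mu_0>0$ (so $E_n+\mu$ stays bounded away from zero), the formal derivatives are
\[
\Phi_\mu=-\sum_n\frac{|\phi_n(a)|^2}{(E_n+\mu)^2},\quad \Phi_{\mu\mu}=2\sum_n\frac{|\phi_n(a)|^2}{(E_n+\mu)^3},\quad \Phi_{EE}=-2\sum_n\frac{|\phi_n(a)|^2}{(E_n-E)^3}.
\]
Each of these carries at least one more inverse power of $E_n$ than the convergent combination in $\Phi$ itself. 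Hence they converge locally uniformly, using the same heat-kernel bounds on $\sum_n|\phi_n(a)|^2 e^{-sE_n}$ quoted from \cite{annals23,Frontiers25}. This justifies termwise differentiation.

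The crucial structural observation is that $\Phi$ is a sum of a function of $\mu$ alone and a function of $E$ alone. This is visible already in \eqref{eq:Phi-instantaneous}, since the integrand splits as $e^{-s\mu}-e^{sE}$. Consequently $\Phi_{E\mu}\equiv0$.

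The implicit function theorem then gives a unique smooth (analytic) branch $\Delta_k(\eta)$ with $\Delta_k(0)=0$. For small $|\eta|$ this branch remains the unique zero in the same spectral gap, which identifies it with the pole ``connected to'' $E_k^*$. We substitute $\Delta_k=\Delta_k^{(1)}+\Delta_k^{(2)}+\cdots$, with $\Delta_k^{(j)}=O(\eta^j)$, into the second-order Taylor expansion
\[
0=\Phi_E^k\Delta+\Phi_\mu^k\eta+\tfrac12\Phi_{EE}^k\Delta^2+\Phi_{E\mu}^k\Delta\eta+\tfrac12\Phi_{\mu\mu}^k\eta^2+O(\eta^3),
\]
and collect orders, using $\Phi_{E\mu}^k=0$. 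At order $\eta$ we obtain $\Phi_E^k\Delta_k^{(1)}+\Phi_\mu^k\eta=0$, which is \eqref{eq:Delta1-mu}. At order $\eta^2$ we obtain $\Phi_E^k\Delta_k^{(2)}+\tfrac12\Phi_{EE}^k(\Delta_k^{(1)})^2+\tfrac12\Phi_{\mu\mu}^k\eta^2=0$, which is \eqref{eq:Delta2-mu}. Division by $\Phi_E^k$ is legitimate since it is nonzero. Higher orders follow the same pattern: at order $\eta^j$ the unknown $\Delta_k^{(j)}$ appears only linearly, with coefficient $\Phi_E^k$, so it is solved iteratively. The time dependence enters only parametrically through $\eta(t)$, so the formulas hold pointwise in $t$.

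The main obstacle is the analytic justification rather than the algebra. One has to show that the differentiated series converge uniformly near $(E_k^*,\mu_0)$, and that the perturbed zero does not leave the spectral gap (or merge with a pole of $G_0$). I would handle the first point by bounding $\sum_n|\phi_n(a)|^2(E_n+c)^{-p}$ for $p\ge2$ via $\int_0^\infty s^{p-1}e^{-cs}K_s(a,a)\,ds$, which is finite when $D\le3$ because $K_s(a,a)\sim(4\pi s)^{-D/2}$ as $s\to0$. I would handle the second point by monotonicity: $\Phi_E<0$ throughout the gap, and $\Phi$ tends to $\mp\infty$ at the gap endpoints, so the zero in the gap is unique and moves continuously with $\mu$.
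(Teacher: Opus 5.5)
Your proposal is correct and follows essentially the same route as the paper: Taylor-expand $\Phi(E_k^*+\Delta_k,\mu_0+\eta)$, use $\Phi(E_k^*,\mu_0)=0$ and $\Phi_{E\mu}=0$ (from the separated $E$- and $\mu$-dependence), and match orders in $\eta$. The implicit-function-theorem and series-convergence justification you add is sound and goes beyond the paper, which expands formally, with two small slips. First, $\Phi_\mu$ and $\Phi_E$ decay like $E_n^{-2}$, the same rate as the combination in $\Phi$, not one power faster (they still converge). Second, since $\Phi_E<0$, $\Phi$ decreases from $+\infty$ to $-\infty$ across a gap.
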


\begin{proof}
Taylor expanding $\Phi(E_k^*+\Delta_k,\mu_0+\eta)$ and using $\Phi(E_k^*,\mu_0)=0$ gives
\begin{equation}
0=\Phi_E^k\Delta_k+\Phi_\mu^k\eta
+\frac12\Phi_{EE}^k\Delta_k^2
+\Phi_{E\mu}^k\Delta_k\eta
+\frac12\Phi_{\mu\mu}^k\eta^2+\cdots.
\end{equation}
For the function \eqref{eq:Phi-series}, the $E$- and $\mu$-dependences are separated; hence $\Phi_{E\mu}^k=0$.  Equating equal orders in $\eta$ gives equations \eqref{eq:Delta1-mu} and \eqref{eq:Delta2-mu}.
\end{proof}

\begin{remark}
Although both \(\Delta_k(t)\) and \(\eta(t)\) depend on time, the above
Taylor expansion is performed pointwise in \(t\) as an expansion of the
two-variable function \(\Phi(E,\mu)\) around \((E_k^*,\mu_0)\). Thus
\(\Delta_k(t)\) and \(\eta(t)\) are treated as small increments in the
variables \(E\) and \(\mu\), respectively. Their possible dependence through
the pole equation is imposed only after the expansion. 
\end{remark}

The relevant derivatives are finite and have the explicit forms
\begin{align}
    \Phi_\mu^k
    &=-\sum_{n=0}^{\infty}\frac{|\phi_n(a)|^2}{(E_n+\mu_0)^2}
      =-\langle a|(H_0+\mu_0)^{-2}|a\rangle \;,
    \label{eq:Phi-mu}\\
    \Phi_{\mu\mu}^k
    &=2\sum_{n=0}^{\infty}\frac{|\phi_n(a)|^2}{(E_n+\mu_0)^3} \;, \\
    \Phi_{EE}^k
    &=-2\sum_{n=0}^{\infty}\frac{|\phi_n(a)|^2}{(E_n-E_k^*)^3} \;.
\end{align}
Although $\Phi_\mu^k$ is independent of $E_k^*$ for the present renormalization convention, we keep the superscript $k$ to emphasize the evaluation point.

The instantaneous projections are given by
\begin{equation}
    P_k(t)=|\Psi_k(t)\rangle\langle\Psi_k(t)| \;,
\end{equation}
where 
\begin{eqnarray}
    \Psi_k(x,t)
    =
    \frac{G_0(x,a|E_k(t))}
    {\left(-\Phi_E(E_k(t),\mu(t))\right)^{1/2}} \;.
\end{eqnarray}
The time dependence of the projection \(P_k(t)\) comes from the
time-dependent parameters entering the instantaneous eigenfunction, namely
the pole position \(E_k(t)\) and the physical parameter
\(\mu(t)\). More explicitly, one may regard
\begin{equation}
    P_k(t)=P_k(E_k(t),\mu(t)) \;.
\end{equation}
However, in the present renormalization scheme the normalization factor
\(-\Phi_E(E_k(t),\mu(t))\) is independent of \(\mu(t)\), since
\begin{equation}
    \Phi_E(E,\mu)
    =
    -\int_0^\infty sK_s(a,a)e^{sE}\,ds .
\end{equation}
Therefore \(P_k(t)\) depends on \(\mu(t)\) only indirectly through the
motion of the pole \(E_k(t)\). This allows us to write, for
notational simplicity, $P_k(t)=P_k(E_k(t))$ and to expand the projection only in powers of
$\Delta_k(t):=E_k(t)-E_k^*$. Thus
\begin{equation}
    P_k(t)
    =
    P_k
    +
    P_{k,E}\Delta_k^{(1)}(t)
    +
    P_{k,E}\Delta_k^{(2)}(t)
    +
    \frac12P_{k,EE}\bigl(\Delta_k^{(1)}(t)\bigr)^2
    +\cdots.
\end{equation}
Here \(P_{k,E}\) and \(P_{k,EE}\) denote derivatives with respect to the pole
position, evaluated at \(E_k^*\). This notation is used only to
simplify the formulas; in a more general parametrization, where the
normalization depends explicitly on the coupling parameter, additional terms
involving derivatives with respect to \(\mu\) would have to be included.

One can show that the integral kernel of the first derivative $P_k$ with respect to $E$ at $E_{k}^*$ is given by
\begin{align}
    P_{k,E}(x,y)
    &=-\frac{G_{0,E}^k(x)\overline{G_0^k(y)}+G_0^k(x)\overline{G_{0,E}^k(y)}}{\Phi_E^k}
    +\frac{\Phi_{EE}^k}{(\Phi_E^k)^2}G_0^k(x)\overline{G_0^k(y)} \;.
    \label{eq:P-E}
\end{align}
Similarly, the second derivative of $P_k$ at $E_{k}^*$ has the following kernel
\begin{align}
    P_{k,EE}(x,y)
    &=-\frac{G_{0,EE}^k(x)\overline{G_0^k(y)}+2G_{0,E}^k(x)\overline{G_{0,E}^k(y)}
    +G_0^k(x)\overline{G_{0,EE}^k(y)}}{\Phi_E^k}
    \nonumber\\
    &\quad
    +2\frac{\Phi_{EE}^k}{(\Phi_E^k)^2}
    \bigl[G_{0,E}^k(x)\overline{G_0^k(y)}+G_0^k(x)\overline{G_{0,E}^k(y)}\bigr]
    \nonumber\\
    &\quad
    +\left(
        \frac{\Phi_{EEE}^k}{(\Phi_E^k)^2}
        -2\frac{(\Phi_{EE}^k)^2}{(\Phi_E^k)^3}
    \right)G_0^k(x)\overline{G_0^k(y)} \;.
    \label{eq:P-EE}
\end{align}
Consequently, the first- and second-order changes of the Hamiltonian are given by 
\begin{align}
    \delta H^{(1)}(t)
    &=\sum_k\Delta_k^{(1)}(t)\bigl(P_k+E_k^*P_{k,E}\bigr) \;,
    \label{eq:H1} \\
    \delta H^{(2)}(t)
    &=\sum_k\Bigg[
        \Delta_k^{(2)}(t)\bigl(P_k+E_k^*P_{k,E}\bigr)
        \nonumber \\ & +\left(\Delta_k^{(1)}(t)\right)^2P_{k,E}
        +\frac12E_k^*\left(\Delta_k^{(1)}(t)\right)^2P_{k,EE}
    \Bigg] \;,
    \label{eq:H2}
\end{align}
where $\Delta_{k}^{(1)}$, $\Delta_{k}^{(2)}$, $P_{k,E}$, and $P_{k,EE}$  are given by (\ref{eq:Delta1-mu}), (\ref{eq:Delta2-mu}), (\ref{eq:P-E}), and (\ref{eq:P-EE}).

To find the transition amplitudes, we need to find the following matrix elements
\begin{eqnarray}
    R_{mn}(t) &:= & \langle \psi_m | H(t)|\psi_n \rangle \;, \\ d_n(t) & := & \langle \psi_n | H(t)|\psi_n \rangle  \;,
\end{eqnarray}
up to first and second order.

The normalization identity $\int |\psi_k(x)|^2 d\mu(x)=1$ is time-independent, or equivalently $E$-independent so that
\begin{eqnarray}
    \frac{\partial}{\partial E} \int_{\mathcal{M}} |G_0(x,a|E)|^2 \dd\mu(x)\Big|_{E=E_{k}^{*}} = -\frac{\partial^2 \Phi}{\partial E^2}\Big|_{E=E_{k}^{*}} \;.
\end{eqnarray}
This gives
\begin{eqnarray}
    2\,\operatorname{Re}\int_{\mathcal{M}} \overline{G_0^k(x)}\,G_{0,E}^k(x)\,\dd\mu(x)=-\Phi_{EE}^{k} \;.
\end{eqnarray}
Then, using the orthonormality of the eigenfunctions $\psi_n$ we obtain
\begin{eqnarray}
\langle \psi_n |P_{k,E}|\psi_n\rangle & = & \iint_{\mathcal{M} \times \mathcal{M}} \overline{\psi_n(x)}\,P_{k,E}(x,y)\,\psi_n(y)\,\dd\mu(x)\dd\mu(y) \nonumber \\ 
& = & 0 \;.    \label{eq:PE-diag-zero}
\end{eqnarray}
Thus the first-order diagonal correction is simply
\begin{equation}
    d_n^{(1)}(t):=\langle\psi_n|\delta H^{(1)}(t)|\psi_n\rangle
    =\Delta_n^{(1)}(t) \;.
    \label{eq:diag-first}
\end{equation}
For $m\ne n$, the first-order off-diagonal matrix element is
\begin{align}
    R_{mn}^{(1)}(t)
    &:=\langle\psi_m|\delta H^{(1)}(t)|\psi_n\rangle
    \nonumber\\
    &=E_n^*\Delta_n^{(1)}(t)
    \frac{\langle\psi_m|G_{0,E}^n\rangle}{\sqrt{-\Phi_E^n}}
    +E_m^*\Delta_m^{(1)}(t)
    \frac{\langle G_{0,E}^m|\psi_n\rangle}{\sqrt{-\Phi_E^m}} \;.
    \label{eq:Rmn-first}
\end{align}
Equivalently, after substituting \eqref{eq:Delta1-mu},
\begin{align}
    R_{mn}^{(1)}(t)
    =\eta(t)\left[
    -E_n^*\frac{\Phi_\mu^n}{\Phi_E^n}
    \frac{\langle\psi_m|G_{0,E}^n\rangle}{\sqrt{-\Phi_E^n}}
    -E_m^*\frac{\Phi_\mu^m}{\Phi_E^m}
    \frac{\langle G_{0,E}^m|\psi_n\rangle}{\sqrt{-\Phi_E^m}}
    \right].
    \label{eq:Rmn-first-eta}
\end{align}

We now insert these Hamiltonian matrix elements into time-dependent perturbation theory.  A state is expanded as
\begin{equation}
    \Psi(x,t)=\sum_m b_m(t)\psi_m(x)\exp\left(-\frac{i}{\hbar}E_m^*t\right).
\end{equation}
The diagonal part is removed by
\begin{equation}
    b_m(t)=\exp\left[-\frac{i}{\hbar}\int_0^t d_m(t')\,dt'\right]c_m(t) \;,
\end{equation}
where $d_m(t)=d_m^{(1)}(t)+d_m^{(2)}(t)+\cdots$.  The coefficients $c_m(t)$ satisfy
\begin{equation}
\frac{dc_m}{dt}
=-\frac{i}{\hbar}\sum_n R_{mn}(t)
\exp\left[\frac{i}{\hbar}(E_m^*-E_n^*)t+\frac{i}{\hbar}\int_0^t(d_m(t')-d_n(t'))\,dt'\right]c_n(t) \;.
\end{equation}
Expanding $c_m=c_m^{(0)}+c_m^{(1)}+c_m^{(2)}+\cdots$ gives
\begin{eqnarray}
    \dot c_m^{(0)} & = & 0 \;, \\ 
 \dot c_m^{(1)} &
    = & -\frac{i}{\hbar}\sum_n R_{mn}^{(1)}(t)
    \exp\left[\frac{i}{\hbar}(E_m^*-E_n^*)t\right]c_n^{(0)} \;, \\ 
    \dot c_m^{(2)} & = &
-\frac{i}{\hbar}\sum_n R_{mn}^{(1)}(t)
\exp\left[\frac{i}{\hbar}(E_m^*-E_n^*)t\right]c_n^{(1)}(t) \nonumber 
\\
& & -\frac{i}{\hbar}\sum_n R_{mn}^{(1)}(t)
\exp\left[\frac{i}{\hbar}(E_m^*-E_n^*)t\right]
\left[\frac{i}{\hbar}\int_0^t(d_m^{(1)}(t')-d_n^{(1)}(t'))\,dt'\right]c_n^{(0)} \nonumber 
\\
& &-\frac{i}{\hbar}\sum_n R_{mn}^{(2)}(t)
\exp\left[\frac{i}{\hbar}(E_m^*-E_n^*)t\right]c_n^{(0)}.
\label{eq:second-order-transition-equation}
\end{eqnarray}
The term $d_m^{(2)}$ does not enter equation \eqref{eq:second-order-transition-equation}; it would multiply $R^{(1)}$ in the phase expansion and hence contributes only at third order.
The second-order dynamics therefore requires exactly the three terms displayed above. 

If the system is initially in the state $\psi_l$, then $c_{n}^{(0)}=\delta_{nl}$. Assuming $c_{m}^{(1)}(0)=0$, the first-order transition amplitude to $\psi_m$ is
\begin{equation}
    c_m^{(1)}(t)=
    -\frac{\ii}{\hbar}\int_0^t
    R_{ml}^{(1)}(t')\,
    \e^{\ii\omega_{ml}t'}\dd t',
    \qquad
    \omega_{ml}=\frac{E_m^*-E_l^*}{\hbar} \;.
    \label{eq:first-amplitude}
\end{equation}

We summarize the main results of this section in the following two
propositions.

\begin{proposition}[Hamiltonian expansion for a time-dependent renormalized strength] \label{Hamiltonian expansion for a time-dependent renormalized strength}
Let \(E_k^*\) be simple shifted poles of the static renormalized
Hamiltonian, and let the renormalized strength be varied according to
\[
    \mu(t)=\mu_0+\eta(t),
    \qquad |\eta(t)|\ll 1 .
\]
Assume that the instantaneous projections depend on \(\mu(t)\) only through
the motion of the poles \(E_k(t)=E_k^*+\Delta_k(t)\). Then the
non-autonomous Hamiltonian admits the perturbative expansion
\[
    H(t)=H+\delta H^{(1)}(t)+\delta H^{(2)}(t)+O(\eta^3),
\]
where
\[
    H=\sum_k E_k^*P_k,
\]
and
\begin{eqnarray}
    \begin{aligned}
    \delta H^{(1)}(t)
    &=
    \sum_k
    \Delta_k^{(1)}(t)
    \bigl(P_k+E_k^*P_{k,E}\bigr),
    \label{prop:eq:H1-summary} \\
    \delta H^{(2)}(t)
    &=
    \sum_k
    \Bigg[
        \Delta_k^{(2)}(t)
        \bigl(P_k+E_k^*P_{k,E}\bigr)
        +
        \bigl(\Delta_k^{(1)}(t)\bigr)^2P_{k,E}
        +
        \frac12E_k^*
        \bigl(\Delta_k^{(1)}(t)\bigr)^2P_{k,EE}
    \Bigg].
    \label{prop:eq:H2-summary}
\end{aligned} \nonumber
\end{eqnarray}
Here \(P_{k,E}\) and \(P_{k,EE}\) are the first and second derivatives of
the rank-one projection with respect to the pole position, evaluated at
\(E=E_k^*\), while \(\Delta_k^{(1)}\) and \(\Delta_k^{(2)}\) are the first-
and second-order pole shifts given in Proposition \ref{Proppoleshifts}.
\end{proposition}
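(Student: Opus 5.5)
The plan is to start from the instantaneous spectral representation $H(t)=\sum_k E_k(t)P_k(t)$ and expand each summand as a product of two expansions in $\eta$. Under the stated assumption, $P_k(t)=P_k(E_k(t))$, so each term is the composite function $f_k(E):=E\,P_k(E)$ evaluated at $E=E_k(t)=E_k^*+\Delta_k(t)$. I would Taylor expand $f_k$ around $E_k^*$ to second order:
\[
E_k(t)P_k(t)=E_k^*P_k+\Delta_k\,(P_k+E_k^*P_{k,E})+\Delta_k^2\Bigl(P_{k,E}+\tfrac12E_k^*P_{k,EE}\Bigr)+O(\Delta_k^3).
\]
This uses $f_k'=P_k+EP_{k,E}$ and $\tfrac12 f_k''=P_{k,E}+\tfrac12EP_{k,EE}$. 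The derivatives $P_{k,E}$ and $P_{k,EE}$ exist as bounded operators with kernels \eqref{eq:P-E} and \eqref{eq:P-EE}, because $G_0(\cdot,a|E)$ is an analytic $L^2$-valued function of $E$ away from $\sigma(H_0)$, and $\Phi_E^k\neq 0$.

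Next I would insert $\Delta_k=\Delta_k^{(1)}+\Delta_k^{(2)}+O(\eta^3)$ from Proposition~\ref{Proppoleshifts}. That proposition gives $\Delta_k^{(1)}=O(\eta)$ and $\Delta_k^{(2)}=O(\eta^2)$, since $\Phi_E^k\neq0$ and the implicit function theorem makes $E_k(\mu)$ analytic in $\mu$. Collecting orders, the $O(\eta)$ part is $\Delta_k^{(1)}(P_k+E_k^*P_{k,E})$. The $O(\eta^2)$ part is $\Delta_k^{(2)}(P_k+E_k^*P_{k,E})+(\Delta_k^{(1)})^2(P_{k,E}+\tfrac12E_k^*P_{k,EE})$. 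The cross term $2\Delta_k^{(1)}\Delta_k^{(2)}$ is $O(\eta^3)$, and so is the cubic Taylor remainder. Summing over $k$ and using \eqref{eq:H-static} for the zeroth order term gives $H+\delta H^{(1)}+\delta H^{(2)}$ as claimed.

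The main obstacle is controlling the sum over $k$, which is infinite, so that the remainder is genuinely $O(\eta^3)$ in a meaningful sense, and the $E_k^*$ factors are unbounded. I would interpret the identity in the weak (quadratic form or matrix element) sense on finite linear combinations of the $\psi_n$, which is how it is used afterwards via $R_{mn}$ and $d_n$. For fixed $m,n$, I would bound the coefficients $\Phi_\mu^k/\Phi_E^k$ uniformly in $k$: $\Phi_\mu$ is $k$-independent, and $|\Phi_E^k|\ge|\phi_k(a)|^2/(E_k-E_k^*)^2$ stays away from zero. I would also need the Taylor remainders of $\langle\psi_m|P_k(E)|\psi_n\rangle$ to be summable in $k$. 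Here I would use the explicit kernels and the convergence of the series for $\Phi$ and its derivatives from the heat-kernel estimates cited earlier. If uniformity fails, I would fall back on stating the expansion termwise for each $k$, i.e.\ as a formal series in $\eta$, which suffices for the subsequent perturbative formulas.
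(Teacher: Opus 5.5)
Your proposal is correct and follows the paper's route. The paper likewise writes $H(t)=\sum_k E_k(t)P_k(E_k(t))$, expands $P_k$ to second order in $\Delta_k$, multiplies by $E_k^*+\Delta_k^{(1)}+\Delta_k^{(2)}$, and collects powers of $\eta$ term by term in $k$; this is exactly your Taylor expansion of $E\,P_k(E)$ combined with your formal termwise fallback.

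One caution on the optional uniformity step: $|\Phi_E^k|$ is not bounded away from zero in general. Already on $S_R^2$ one finds $-\Phi_E^k\to 0$ as $l\to\infty$, so $\Phi_\mu^k/\Phi_E^k$ is not uniformly bounded in $k$, and that part of your plan would need a different estimate.
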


\begin{proposition}[Perturbative equations for transition amplitudes] \label{Perturbative equations for transition amplitudes}
Let the state be expanded in the static eigenbasis as
\[
    \Psi(x,t)
    =
    \sum_m b_m(t)\psi_m(x)
    \exp\left(-\frac{i}{\hbar}E_m^*t\right),
\]
and remove the diagonal part by writing
\[
    b_m(t)
    =
    \exp\left[
    -\frac{i}{\hbar}\int_0^t d_m(t')\,dt'
    \right]c_m(t).
\]
If
\[
    c_m(t)=c_m^{(0)}(t)+c_m^{(1)}(t)+c_m^{(2)}(t)+\cdots,
\]
then, up to second order, the coefficients satisfy
\begin{eqnarray}
\begin{aligned}
    \dot c_m^{(0)}
    &=
    0,
    \label{prop:eq:c0-summary}\\
    \dot c_m^{(1)}
    &=
    -\frac{i}{\hbar}
    \sum_n
    R_{mn}^{(1)}(t)
    \exp\left[
    \frac{i}{\hbar}(E_m^*-E_n^*)t
    \right]
    c_n^{(0)},
    \label{prop:eq:c1-summary}\\
    \dot c_m^{(2)}
    &=
    -\frac{i}{\hbar}
    \sum_n
    R_{mn}^{(1)}(t)
    \exp\left[
    \frac{i}{\hbar}(E_m^*-E_n^*)t
    \right]
    c_n^{(1)}(t)\\
    &\quad
    -\frac{i}{\hbar}
    \sum_n
    R_{mn}^{(1)}(t)
    \exp\left[
    \frac{i}{\hbar}(E_m^*-E_n^*)t
    \right]
    \left[
    \frac{i}{\hbar}
    \int_0^t
    \bigl(d_m^{(1)}(t')-d_n^{(1)}(t')\bigr)\,dt'
    \right]
    c_n^{(0)}
     \\
    &\quad
    -\frac{i}{\hbar}
    \sum_n
    R_{mn}^{(2)}(t)
    \exp\left[
    \frac{i}{\hbar}(E_m^*-E_n^*)t
    \right]
    c_n^{(0)} \;, 
    \label{prop:eq:c2-summary}
\end{aligned} \nonumber
\end{eqnarray}
where $R_{mn}^{(r)}=\langle \psi_m|\delta H^{(r)}(t)|\psi_n\rangle$ and $\delta H^{(r)}$ are given by Proposition \ref{Hamiltonian expansion for a time-dependent renormalized strength}. 
Thus the second-order transition amplitude has three distinct contributions:
the time-ordered iteration of the first-order off-diagonal perturbation, the
correction coming from the first-order diagonal phase, and the direct
second-order off-diagonal perturbation.
\end{proposition}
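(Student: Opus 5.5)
The plan is to derive the coupled equations directly from the Schr\"odinger equation $i\hbar\,\partial_t\Psi=H(t)\Psi$ with $H(t)=H+\delta H^{(1)}(t)+\delta H^{(2)}(t)+O(\eta^3)$, as given by Proposition \ref{Hamiltonian expansion for a time-dependent renormalized strength}. We will use the completeness and orthonormality of the static eigenfunctions $\{\psi_m\}$ established in \cite{Frontiers25}. The steps are as follows.

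\textbf{Step 1: equation for $b_m$.} Insert the expansion $\Psi=\sum_m b_m\psi_m e^{-iE_m^*t/\hbar}$ and take the inner product with $\psi_m$. Since $H\psi_m=E_m^*\psi_m$, the terms $E_m^*b_m$ cancel. What remains is
\[
i\hbar\,\dot b_m=\sum_n\langle\psi_m|(H(t)-H)|\psi_n\rangle\, e^{i(E_m^*-E_n^*)t/\hbar}\, b_n .
\]
In this sum we split off the diagonal term $n=m$, whose matrix element is $d_m(t)$. We interpret $R_{mn}$ and $d_n$ as matrix elements of the perturbation $H(t)-H$; this is consistent with how $R^{(1)}_{mn}$ and $d^{(1)}_n$ are defined from $\delta H^{(1)}$.

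\textbf{Step 2: remove the diagonal part.} Substitute $b_m=\exp[-\frac{i}{\hbar}\int_0^t d_m]\,c_m$. Differentiating the phase produces $d_m b_m$, which exactly cancels the diagonal term. The off-diagonal sum then acquires the relative phase $\exp[\frac{i}{\hbar}\int_0^t(d_m-d_n)]$, and this reproduces the exact equation for $\dot c_m$ stated before the proposition. The sum runs over $n\neq m$; one may also include $n=m$ with $R_{mm}=0$.

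\textbf{Step 3: order counting.} We use that $R_{mn}=R^{(1)}_{mn}+R^{(2)}_{mn}+\cdots$ and $d_m=d^{(1)}_m+\cdots$, and that the $r$-th order terms are $O(\eta^r)$ by Proposition \ref{Proppoleshifts}. We then expand the phase factor as
\[
\exp\Big[\tfrac{i}{\hbar}\textstyle\int_0^t(d_m-d_n)\Big]=1+\tfrac{i}{\hbar}\textstyle\int_0^t(d^{(1)}_m-d^{(1)}_n)+O(\eta^2),
\]
and expand $c_n=c^{(0)}_n+c^{(1)}_n+\cdots$. Collecting terms of order $\eta^0$, $\eta^1$ and $\eta^2$ gives the three displayed equations:
\begin{itemize}
\item At order zero nothing survives, so $\dot c^{(0)}_m=0$.
\item At first order the only term is $R^{(1)}c^{(0)}$.
\item At second order exactly three terms appear: $R^{(1)}c^{(1)}$, $R^{(1)}\cdot(\text{first-order phase})\cdot c^{(0)}$, and $R^{(2)}c^{(0)}$.
\end{itemize}
Any term containing $d^{(2)}$ necessarily multiplies $R^{(1)}$, so it is $O(\eta^3)$ and is excluded.

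\textbf{Main obstacle.} The delicate point is justifying the term-by-term manipulations. These include exchanging the infinite sum over $n$ with differentiation, and the assumption that $\delta H^{(r)}$ has finite matrix elements between the $\psi_n$. The latter holds because $P_{k,E}$ and $P_{k,EE}$ are built from $G_0^k$, $G^k_{0,E}$ and $G^k_{0,EE}$, which lie in $L^2$ by the convergent series for the derivatives of $\Phi$. At the formal level adopted in the paper, we would state these as the standing assumptions (smallness of $\eta$ and convergence of the series). The proof then reduces to the straightforward order bookkeeping of Step 3. The closing interpretive sentence of the proposition is just a reading of the three second-order terms.
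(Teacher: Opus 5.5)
Your proposal is correct and follows essentially the same route as the paper: project the Schr\"odinger equation onto the static eigenbasis, absorb the diagonal elements into the phase $\exp[-\frac{i}{\hbar}\int_0^t d_m]$, expand the relative phase to first order, and collect powers of $\eta$, noting that $d^{(2)}$ enters only at third order. Your two conventions are what make the paper's equations consistent: $R_{mn}$ and $d_m$ are taken as matrix elements of $H(t)-H$ rather than of the full $H(t)$, and the sum excludes $n=m$ (equivalently $R_{mm}=0$).
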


\subsection{Delta-kick perturbation}
\label{subsec:delta-kick}

As a simple impulsive example, take
\begin{equation}
    \eta(t)=\epsilon\delta(t-t_0) \;,
\end{equation}
for $t_0>0$. Let 
$\mathcal V_{mn}$ be the coefficient of $\eta(t)$ in equation \eqref{eq:Rmn-first-eta}.  If the system is initially in the state $\psi_l$,  for \(t\neq t_0\)
\begin{equation}
    c_m^{(1)}(t)
    =-\frac{i\epsilon}{\hbar}\Theta(t-t_0)
    \exp\left[\frac{i}{\hbar}(E_m^*-E_l^*)t_0\right] \mathcal V_{ml} \;,
    \label{eq:delta-kick-first-order}
\end{equation}
where we have used $\int_{0}^{t} \delta(t'-t_0)f(t') dt' = \Theta(t-t_0) f(t_0)$ and $\Theta$ is the Heaviside step function. The value at \(t=t_0\) depends on the convention chosen for the Heaviside function, and is immaterial for the transition probabilities considered here. Thus there is no transition before the kick, and after the kick the transition amplitude is frozen at the value produced at $t=t_0$.  The corresponding first-order transition probability is
\begin{equation}
    P_{l \to m}^{(1)}(t)=\frac{\epsilon^2}{\hbar^2}|\mathcal V_{ml}|^2,
    \qquad t>t_0 \;.
\end{equation}

At second order one must distinguish the time-ordered iteration of two first-order insertions from the direct second-order Hamiltonian term.  The latter contains $(\eta(t))^2$ and is therefore not defined for an ideal Dirac delta unless the pulse is regularized.

\section{Perturbation Theory for a Moving Delta Potential}
\label{sec:moving}

Let \(\gamma:(-\varepsilon,\varepsilon)\to\mathcal M\) be a smooth curve
with \(\gamma(0)=a\) and with local coordinates $\gamma(s)=(\gamma^1(s), \cdots,\gamma^d(s))$. We next keep the renormalized strength fixed and let the support point move on a prescribed curve, parametrized by its arclength $s$.  Let
\begin{equation}
    q(t)=\gamma(s(t)),
    \qquad
    \gamma(0)=a,
    \qquad
    s(0)=0 \;,
\end{equation}
where $s(t)$ is small. Therefore the Green factors in the instantaneous Hamiltonian change in two independent ways:
\begin{equation}
G_0(x,a|E_k^*)\quad\longrightarrow\quad G_0(x,\gamma(s(t))|E_k^*(t)) \;.    
\end{equation}
There is an energy dependence through \(E_k^*(t)\), and there is a spatial dependence through the
moving point \(\gamma(s(t))\).  The calculation below keeps both effects.

Throughout this section all derivatives of \(G_0\) and \(\Phi\) are evaluated at
\begin{equation}
E=E_k^*,\qquad q=a=\gamma(0) \;,    
\end{equation}
unless stated otherwise.  Note that {\it spatial derivatives act on the moving support coordinate} \(q\).  We write
\begin{equation}
v^i:=\dot\gamma^i(0),
\qquad
b^i:=\frac{D\dot\gamma^i}{Ds}(0) \;,    
\end{equation}
where \(D/Ds\) is the covariant derivative along the curve. In local
coordinates \cite{Rosenberg},
\begin{equation}
    \frac{D\dot\gamma^i}{Ds}
    =
    \frac{d^2\gamma^i}{ds^2}
    +
    \Gamma^i_{jk}(\gamma(s))
    \frac{d\gamma^j}{ds}
    \frac{d\gamma^k}{ds} \;,
\end{equation}
where $\Gamma^{i}_{jk}$ is the Christoffel symbol. Thus $b^i=\ddot\gamma^i(0)+\Gamma^i_{jk}(a)v^jv^k$. Then, the expansion of $\gamma$ near \(s=0\) is given by
$\gamma^i(s)=a^i+s v^i+\frac{s^2}{2}b^i+\cdots$. If \(\gamma\) is a geodesic with affine parameter, then \(b^i=0\). For a scalar function \(F\), the first derivative along the curve at $s=0$ is
\begin{equation}
    \left.\frac{d}{ds}F(\gamma(s))\right|_{s=0}
    =
    v^i\partial_iF(a) \;.
\end{equation}
The second derivative is
\begin{equation}
    \frac{d^2}{ds^2}F(\gamma(s))
    =
    \ddot\gamma^i(s)\partial_iF(\gamma(s))
    +
    \dot\gamma^i(s)\dot\gamma^j(s)
    \partial_j\partial_iF(\gamma(s)) \;.
\end{equation}
Using
\begin{equation}
    b^i
    =
    \ddot\gamma^i(0)
    +
    \Gamma^i_{jk}(a)v^jv^k
\end{equation}
and
\begin{equation}
    \nabla_j\partial_iF
    =
    \partial_j\partial_iF
    -
    \Gamma^m_{ji}\partial_mF \;,
\end{equation}
the Christoffel-symbol terms cancel. Therefore
\begin{equation}
    \left.\frac{d^2}{ds^2}F(\gamma(s))\right|_{s=0}
    =
    b^i\partial_iF(a)
    +
    v^iv^j\nabla_j\partial_iF(a) \;.
\end{equation}
Substituting this into the ordinary one-dimensional Taylor expansion of
\(F(\gamma(s))\) gives
\begin{equation}
    F(\gamma(s))
    =F(a)+sv^i\partial_iF(a)
    +\frac{s^2}{2}\left[b^i\partial_iF(a)+v^iv^j\nabla_i\partial_jF(a)\right]+O(s^3).
    \label{eq:covariant-taylor-moving}
\end{equation}
Spatial derivatives act on the moving support coordinate.  Applying this to the Green function and combining it with the energy expansion gives, up to second order,
\begin{equation}
\begin{aligned}
G_0(x,\gamma(s(t))|E_k^*(t))
={}&G_0(x,a|E_k^*)
+\Delta _k^{(1)}(t)\partial_EG_0(x,a|E_k^*)
+s(t)v^i\partial_iG_0(x,a|E_k^*)
\\
&+\Delta_k^{(2)}(t)\partial_EG_0(x,a|E_k^*)
+\frac12\bigl(\Delta_k^{(1)}(t)\bigr)^2\partial_E^2G_0(x,a|E_k^*)
\\
&+s(t)\Delta_k^{(1)}(t)v^i\partial_i\partial_EG_0(x,a|E_k^*)
\\
&+\frac{s(t)^2}{2}\left[b^i\partial_iG_0(x,a|E_k^*)
+v^iv^j\nabla_i\partial_jG_0(x,a|E_k^*)\right] \;.
\end{aligned}
\label{eq:moving-green-expansion}
\end{equation}
The same formula applies to $G_0(\gamma(s(t)),y|E_k^*(t))$, with spatial derivatives acting on the first argument.

The instantaneous poles are now defined by
\begin{equation}
    \Phi(E_k^*(t),q(t))=0 \;.
\end{equation}
Let $\Phi^{k}_E= \partial_E \Phi(E_{k}^{*},a)$ and $\Phi^{k}_{EE}= \partial_{E}^{2} \Phi(E_{k}^{*},a)$. Expanding this equation by using equation \eqref{eq:covariant-taylor-moving} gives
\begin{eqnarray}
    & \Phi_E^k\Delta_k^{(1)}(t)+s(t)v^i\partial_i\Phi^k + \Phi_E^k \Delta_k^{(2)}(t)
+\frac12\Phi_{EE}^k\bigl(\Delta_k^{(1)}(t)\bigr)^2
+s(t)\Delta_k^{(1)}(t)v^i\partial_i\Phi_E^k \nonumber 
\\
&+\frac{s(t)^2}{2}
\left[b^i\partial_i\Phi^k+v^iv^j\nabla_i\partial_j\Phi^k\right] =0 \;.
\end{eqnarray}
One can solve this equation iteratively order by order. The first-order shift yields
\begin{equation}
    \Phi_E^k\Delta_k^{(1)}(t)+s(t)v^i\partial_i\Phi^k=0,
    \qquad
    \Delta_k^{(1)}(t)=-s(t)\frac{v^i\partial_i\Phi^k}{\Phi_E^k} \;,
    \label{eq:moving-first-order-shift}
\end{equation}
where all derivatives are evaluated at $(E,q)=(E_k^*,a)$.  The second-order shift satisfies
\begin{equation}
\begin{aligned}
\Phi_E^k \Delta_k^{(2)}(t)
={}&-\frac12\Phi_{EE}^k\bigl(\Delta_k^{(1)}(t)\bigr)^2
-s(t)\Delta_k^{(1)}(t)v^i\partial_i\Phi_E^k
\\
&-\frac{s(t)^2}{2}
\left[b^i\partial_i\Phi^k+v^iv^j\nabla_i\partial_j\Phi^k\right].
\end{aligned}
\label{eq:moving-second-order-shift}
\end{equation}
The mixed term in equation \eqref{eq:moving-second-order-shift} carries the factor $s(t)v^i$; this is necessary for it to have the correct perturbative order.
Hence, we prove the following
\begin{proposition}[Pole shifts for a moving support point] \label{movingpoleshifts}
For a simple shifted eigenvalue, the first- and second-order shifts are given by
\begin{align}
    \Delta_k^{(1)}(t)
    &=-s(t)\frac{v^i\partial_i\Phi^k}{\Phi_E^k} \;,
    \label{eq:Delta1-moving}\\
    \Delta_k^{(2)}(t)
    &=-\frac{s(t)^2}{\Phi_E^k}
    \Bigg[
        \frac12\Phi_{EE}^k\biggl(\frac{v^i\partial_i\Phi^k}{\Phi_E^k}\biggr)^2
        - \biggl(\frac{v^j\partial_j \Phi^k}{\Phi_E^k}\biggr) v^i\partial_i\Phi_E^k \nonumber \\ & 
\hspace{5cm} +\frac12\left[b^i\partial_i\Phi^k+v^iv^j\nabla_i\partial_j\Phi^k\right]
    \Bigg] \;.
    \label{eq:Delta2-moving}
\end{align}
\end{proposition}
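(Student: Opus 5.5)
The plan is to treat the pole condition $\Phi(E_k^*(t),q(t))=0$ as an implicit equation in the two small increments $\Delta_k(t)=E_k^*(t)-E_k^*$ and $s(t)$, and to solve it order by order. This parallels the proof of Proposition~\ref{Proppoleshifts}, except that the variation in $\mu$ is replaced by the displacement of the support point along $\gamma$. Pointwise in $t$ I regard $F(E,s):=\Phi(E,\gamma(s))$ as a smooth function of two real variables near $(E_k^*,0)$. Smoothness in $s$ should hold because the support point stays away from the nodal sets, so the series \eqref{eq:Phi-series} and its derivatives in $q$ converge locally uniformly; this is the analytic input I take from the earlier discussion. The static pole condition gives $F(E_k^*,0)=0$, and simplicity gives $\Phi_E^k<0$. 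The implicit function theorem therefore yields a unique smooth branch $E=E_k^*+\Delta_k(s)$, and it remains only to compute its Taylor coefficients.

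Next I apply the covariant Taylor formula \eqref{eq:covariant-taylor-moving} to the scalar functions $\Phi(E,\cdot)$ and $\Phi_E(E,\cdot)$ at fixed $E$, and combine the result with the ordinary expansion in $E$. This gives the second-order expansion displayed just before the proposition: the terms $\Phi_E^k\Delta_k$, $s v^i\partial_i\Phi^k$, $\tfrac12\Phi_{EE}^k\Delta_k^2$, the mixed term $s\,\Delta_k v^i\partial_i\Phi_E^k$, and $\tfrac{s^2}{2}[b^i\partial_i\Phi^k+v^iv^j\nabla_i\partial_j\Phi^k]$.

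The mixed term needs a check. Its coefficient is $\partial_s\partial_E F(E_k^*,0)=v^i\partial_i\Phi_E^k$, so the mixed-term coefficient in the Taylor polynomial of $F$ is exactly $1\cdot v^i\partial_i\Phi_E^k$, with no factor $\tfrac12$. Unlike in Proposition~\ref{Proppoleshifts}, this term does not vanish. The reason is that $\Phi_E(E,q)=-\partial_E G_0(q,q|E)$ genuinely depends on $q$, so the $E$- and $q$-dependences are not separated.

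Then I insert $\Delta_k=\Delta_k^{(1)}+\Delta_k^{(2)}+\cdots$ with $\Delta_k^{(j)}=O(s^j)$ and collect powers of $s$. At order $s$, the equation $\Phi_E^k\Delta_k^{(1)}+s v^i\partial_i\Phi^k=0$ gives \eqref{eq:Delta1-moving}. At order $s^2$, I divide \eqref{eq:moving-second-order-shift} by $\Phi_E^k$ and substitute \eqref{eq:Delta1-moving} into the first two terms on its right-hand side. The term $-\tfrac12\Phi_{EE}^k(\Delta_k^{(1)})^2$ becomes $-\tfrac{s^2}{2}\Phi_{EE}^k(v^i\partial_i\Phi^k/\Phi_E^k)^2$. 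The term $-s\Delta_k^{(1)}v^i\partial_i\Phi_E^k$ becomes $+s^2(v^j\partial_j\Phi^k/\Phi_E^k)v^i\partial_i\Phi_E^k$. Factoring out $-s^2/\Phi_E^k$ then reproduces the bracket in \eqref{eq:Delta2-moving}, including the relative minus sign on the mixed term.

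The main obstacle is not the algebra but justifying the expansion itself. The quantities needed are the derivatives $\partial_i\Phi$, $\nabla_i\partial_j\Phi$ and $\partial_i\Phi_E$ at $(E_k^*,a)$, in the renormalized setting where $\Phi$ is defined only after subtraction. My first step there is to use the series representation \eqref{eq:Phi-series}. In it the $q$-dependence enters only through $|\phi_n(q)|^2$, multiplied by $\frac{E+\mu_0}{(E_n+\mu_0)(E_n-E)}=O(E_n^{-2})$, and the $\Phi_E$ series carries the weight $(E_n-E)^{-2}$. Heat-kernel/Weyl-type bounds on $\sum|\nabla\phi_n|^2 E_n^{-2}$ in $d\le 3$ should then give uniform convergence of the differentiated series. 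The second covariant derivative is the delicate one: in $d=3$ the sum $\sum|\nabla^2\phi_n|^2E_n^{-2}$ diverges. For that term I would use the subtracted heat-kernel form $\int_0^\infty(K_s(q,q)-\text{local parametrix})(\dots)\,ds$ instead, since the singular part of the diagonal heat kernel has $q$-derivatives that are lower order in $s$ after the cancellation $e^{-s\mu_0}-e^{sE}=O(s)$. Throughout, I treat the smoothness of $\Phi$ near $(E_k^*,a)$ as given, as the paper implicitly does.
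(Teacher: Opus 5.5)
Your proof is correct and follows the paper's own route. You expand $\Phi(E_k^*+\Delta_k,\gamma(s))$ jointly in $E$ and, via the covariant Taylor formula, in $s$. You read off $\Delta_k^{(1)}$ at order $s$ and substitute it into the order-$s^2$ balance \eqref{eq:moving-second-order-shift}. Your implicit-function-theorem framing and your point that, unlike in Proposition~\ref{Proppoleshifts}, the mixed term $s\,\Delta_k v^i\partial_i\Phi_E^k$ survives with coefficient $1$ are accurate additions. One correction to your regularity aside: $\sum_n|\nabla\phi_n(q)|^2E_n^{-2}$ does not converge in $d=2$ (it diverges logarithmically) or in $d=3$, because $\sum_{E_n\le\Lambda}|\nabla\phi_n(q)|^2\sim\Lambda^{d/2+1}$. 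So even the first $q$-derivatives of $\Phi$ and $\Phi_E$ must be justified through the heat-kernel representation, where the leading term $(4\pi s)^{-d/2}$ is $q$-independent, and not by termwise absolute convergence of the spectral series.
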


The instantaneous moving-center Hamiltonian is
\begin{equation}
    H(t) = \sum_k h_k(t)  
    =\sum_k E_k^*(t)
    \frac{G_0(\cdot,q(t)|E_k^*(t))\,\overline{G_0(\cdot,q(t)|E_k^*(t))}}
    {-\Phi_E(E_k^*(t),q(t))} \;.
\end{equation}
The kernel of the instantaneous spectral term is
\begin{equation}
h_{k}(t; x,y) =E_k^*(t)
\frac{G_0(x,q(t)|E_k^*(t))\overline{G_0(y,q(t)|E_k^*(t))}}
{-\partial_E\Phi(E_k^*(t),q(t))} \;.    
\end{equation}
This has the same architecture as before:
\begin{equation}
h_{k}(t; x,y)=E_k^*(t)\,G_k(t;x)\,\overline{G_k(t;y)}\,D_k(t) \;,    
\end{equation}
where $G_k(t;x)=G_0(x,q(t)|E_k^*(t))$, and
\begin{equation}
D_k(t)=\frac{1}{-\partial_E\Phi(E_k^*(t),q(t))} \;.    
\end{equation}
The expansion of the Green function is given by equation \eqref{eq:moving-green-expansion} with the following form
\begin{equation}
G_k(t;x)=G_k^{(0)}(x)+G_k^{(1)}(t;x)+G_k^{(2)}(t;x)+\cdots,    
\end{equation}
and similarly for \(D_k\).  The zeroth-order terms are
\begin{eqnarray}
G_k^{(0)}(x)=G_0(x,a|E_k^*),\qquad
D_k^{(0)}=-\frac1{\Phi_{E}^{k}} \;.    
\end{eqnarray}
The first-order Green factors are
\begin{equation}
G_k^{(1)}(t;x)=
\Delta_k^{(1)}(t) \partial_EG_0(x,a|E_k^*)
+s(t) v^i\partial_iG_0(x,a|E_k^*) \;,    
\end{equation}
where the relation between $s$ and $\Delta_{k}^{(1)}$ is given by equation \eqref{eq:moving-first-order-shift}. The second-order Green factors are
\begin{eqnarray}
\begin{aligned}
G_k^{(2)}(t;x)=&\,
\Delta_k^{(2)}(t) \partial_E G_0(x,a|E_k^*)
+\frac12\left(\Delta_k^{(1)}(t)\right)^2\partial_E^2G_0(x,a|E_k^*)
\cr
&+s (t)\,\Delta_k^{(1)}(t) v^i\partial_i\partial_EG_0(x,a|E_k^*)
\cr
&+\frac{s^2(t)}{2}\left[
 b^i\partial_iG_0(x,a|E_k^*)
+v^iv^j\nabla_i\partial_jG_0(x,a|E_k^*)
\right] \;,
\end{aligned}
\end{eqnarray}
For the denominator, define explicitly
\begin{equation}
Q_k^{(1)}:=\Phi_{EE}^k \Delta_k^{(1)}+s v^i \partial_i \Phi_{E}^{k} \;,    
\end{equation}
where we have suppressed the independent variables in the longer products to improve readability. Similarly,
\begin{eqnarray}
Q_k^{(2)}:=
\Phi_{EE}^{k} \Delta_k^{(2)}
+\frac12\Phi_{EEE}^{k}\left(\Delta_k^{(1)}\right)^2
+s v^i \partial_i \Phi_{EE}^{k} \Delta_k^{(1)}
+\frac{s^2}{2}\left[b^i \partial_i \Phi_{E}^{k} +v^iv^j \nabla_i \partial_j\Phi_{E}^{k} \right].    
\end{eqnarray}
Then
\begin{equation}
\partial_E\Phi(E_k^*(t),q(t))
=\Phi_{E}^{k} +Q_k^{(1)}+Q_k^{(2)}+\text{third and higher order terms},    
\end{equation}
and therefore
\begin{equation}
D_k^{(1)}=\frac{Q_k^{(1)}}{(\Phi_{E}^{k})^2} \;,
\qquad
D_k^{(2)}=\frac{Q_k^{(2)}}{(\Phi_{E}^{k})^2}-\frac{(Q_k^{(1)})^2}{(\Phi_{E}^{k})^3} \;.    
\end{equation}
Then, we have 
\begin{equation}
\frac1{-(\Phi_{E}^k+Q_k^{(1)}+Q_k^{(2)})}
=-\frac1{\Phi_{E}^k}+\frac{Q_k^{(1)}}{(\Phi_{E}^{k})^2}
+\frac{Q_k^{(2)}}{(\Phi_{E}^k)^2}
-\frac{(Q_k^{(1)})^2}{(\Phi_{E}^k)^3} \;.    
\end{equation}
The first-order kernel is then
\begin{eqnarray}
\begin{aligned}
h_{k}^{(1)}(t; x,y)=&\,
\Delta_k^{(1)} (t) G_k^{(0)}(x)\overline{G_k^{(0)}(y)}D_k^{(0)}
\cr
&+E_k^*\left[G_k^{(1)}(t; x)\overline{G_k^{(0)}(y)}+G_k^{(0)}(x)\overline{G_k^{(1)}(t;y)}\right]D_k^{(0)}
\cr
&+E_k^*G_k^{(0)}(x)\overline{G_k^{(0)}(y)}D_k^{(1)}(t).
\end{aligned}
\end{eqnarray}
The second-order kernel is
\begin{eqnarray}
\begin{aligned}
h_{k}^{(2)}&\,(t;x,y)=
\Delta_k^{(2)}(t) G_k^{(0)}(x)\overline{G_k^{(0)}(y)}D_k^{(0)}
\cr
&+\Delta_k^{(1)}(t)\left(G_k^{(1)}(t;x)\overline{G_k^{(0)}(y)}+G_k^{(0)}(x)\overline{G_k^{(1)}(t;y)}\right)D_k^{(0)}
+\Delta_k^{(1)}(t) G_k^{(0)}(x)\overline{G_k^{(0)}(y)}D_k^{(1)}(t)
\cr
&+E_k^*\left(G_k^{(2)}(t;x)\overline{G_k^{(0)}(y)}+G_k^{(0)}(x)\overline{G_k^{(2)}(t;y)}+G_k^{(1)}(x)\overline{G_k^{(1)}(t;y)}\right)D_k^{(0)}
\cr
&+E_k^*\left(G_k^{(1)}(t;x)\overline{G_k^{(0)}(y)}+G_k^{(0)}(x)\overline{G_k^{(1)}(t;y)}\right)D_k^{(1)}(t)
+E_k^*G_k^{(0)}(x)\overline{G_k^{(0)}(y)}D_k^{(2)}(t) \;.
\end{aligned}
\end{eqnarray}
Finally,
\begin{eqnarray}
\delta H^{(1)}(t;x,y)=\sum_k h_{k}^{(1)}(t;x,y),
\qquad
\delta H^{(2)}(t; x,y)=\sum_k h_{k}^{(2)}(t;x,y).    \label{eq:deltaHmoving}
\end{eqnarray}
The diagonal and off-diagonal perturbations are obtained exactly as in the fixed-center calculation:
\begin{equation}
d_m^{(r)}=\langle\psi_m|\delta H^{(r)}|\psi_m\rangle,
\qquad
R_{mn}^{(r)}=\langle\psi_m|\delta H^{(r)}|\psi_n\rangle,
\quad m\neq n \;,    
\end{equation}
where $r=1,2$. At first order, the normalized-projector argument again gives
\begin{equation}
d_m^{(1)}=\Delta_m^{(1)}.    
\end{equation}
The off-diagonal first-order moving-center term is generated by the first derivative of the projector.
Writing it explicitly gives, for \(m\neq n\),
\begin{eqnarray}
\begin{aligned}
&\, R_{mn}^{(1)}=
E_n^*\frac{1}{\sqrt{-\Phi_E(E_n^*,a)}}
\int_{\mathcal{M}} \overline{\psi_m(x)}
\left[
\Delta_n^{(1)}\partial_EG_0(x,a|E_n^*)
+s v^i\partial_iG_0(x,a|E_n^*)
\right]\dd\mu(x)
\cr
&+E_m^*\frac{1}{\sqrt{-\Phi_E(E_m^*,a)}}
\int_{\mathcal{M}}
\overline{\left[
\Delta_m^{(1)}\partial_EG_0(y,a|E_m^*)
+s v^i\partial_iG_0(y,a|E_m^*)
\right]}
\psi_n(y)\dd\mu(y).
\end{aligned}
\end{eqnarray}
For the second-order calculations, it is useful to define
\begin{equation}
N_k:=-\Phi_E(E_k^*,a)>0,    
\end{equation}
and, for \(r=0,1,2\),
\begin{equation}
I_{mk}^{(r)}:=\int_{\mathcal{M}} \overline{\psi_m(x)}\,G_k^{(r)}(x)\dd\mu(x) \;.  
\end{equation}
This is not a new object; it only abbreviates the Green-function integrals already written above.  At
zeroth order,
\begin{equation}
I_{mk}^{(0)}=\sqrt{N_k}\,\delta_{mk} \;,    
\end{equation}
The first-order integral is explicitly
\begin{equation}
I_{mk}^{(1)}=\int_{\mathcal{M}} \overline{\psi_m(x)}
\left[
\Delta_k^{(1)}\partial_E G_0(x,a|E_k^*)
+s v^i\partial_iG_0(x,a|E_k^*)
\right]\dd\mu(x) \;,    \label{eq:I1}
\end{equation}
The second-order integral is
\begin{eqnarray}
\begin{aligned}
I_{mk}^{(2)}=\int_{\mathcal{M}} \overline{\psi_m(x)}\Bigg[&
\Delta_k^{(2)}\partial_E G_0(x,a|E_k^*)
+\frac12\left(\Delta_k^{(1)}\right)^2\partial_E^2G_0(x,a|E_k^*)
\cr
&+s\,\Delta_k^{(1)}v^i\partial_i\partial_EG_0(x,a|E_k^*)
\cr
&+\frac{s^2}{2}\left(
 b^i\partial_iG_0(x,a|E_k^*)
+v^iv^j\nabla_i\partial_jG_0(x,a|E_k^*)
\right)
\Bigg]\dd\mu(x) \;.
\end{aligned}    \label{eq:I2}
\end{eqnarray}
Thus all the quantities below are still written only in terms of the Green functions,
their energy derivatives, their spatial derivatives, and the derivatives of \(\Phi\).

The diagonal second-order contribution is obtained by setting \(m=n\) in the matrix element of
\(\delta H_{\rm mov}^{(2)}\).  From equation \eqref{eq:deltaHmoving}, we find
\begin{eqnarray}
\begin{aligned}
d_m^{(2)}=\sum_k\Bigg\{
\Delta_k^{(2)} & I_{mk}^{(0)}\overline{I_{mk}^{(0)}}D_k^{(0)} 
\cr
&+\Delta_k^{(1)}\left(I_{mk}^{(1)}\overline{I_{mk}^{(0)}}+I_{mk}^{(0)}\overline{I_{mk}^{(1)}}\right)D_k^{(0)}
+\Delta_k^{(1)}I_{mk}^{(0)}\overline{I_{mk}^{(0)}}D_k^{(1)}
\cr
&+E_k^*\left(I_{mk}^{(2)}\overline{I_{mk}^{(0)}}+I_{mk}^{(0)}\overline{I_{mk}^{(2)}}+I_{mk}^{(1)}\overline{I_{mk}^{(1)}}\right)D_k^{(0)}
\cr
&+E_k^*\left(I_{mk}^{(1)}\overline{I_{mk}^{(0)}}+I_{mk}^{(0)}\overline{I_{mk}^{(1)}}\right)D_k^{(1)}
+E_k^*I_{mk}^{(0)}\overline{I_{mk}^{(0)}}D_k^{(2)}
\Bigg\} \;.
\end{aligned} \label{eq:d2moving}
\end{eqnarray}
The first line immediately reduces to \(\Delta_m^{(2)}\), because
\[
I_{mk}^{(0)}\overline{I_{mk}^{(0)}}D_k^{(0)}
=N_k\delta_{mk}\frac{1}{N_k}=\delta_{mk} \;.
\]
The remaining terms are the genuine second-order change of the diagonal matrix element caused by the
motion of the instantaneous eigenprojector.  In the time-dependent perturbation equations of the paper,
only \(d_m^{(1)}\) is needed to form \(\dot c_m^{(2)}\), but the full \(d_m^{(2)}\) is the expression above.

Now take \(m\neq n\).  The zeroth-projector terms vanish off diagonal, but the first- and second-order
projector variations do not.  Starting again from (M4), and using
\(I_{mk}^{(0)}=\sqrt{N_k}\delta_{mk}\), one obtains
\begin{eqnarray}
\begin{aligned}
\delta R_{mn}^{(2)}=
\frac{\Delta_n^{(1)}}{\sqrt{N_n}} I_{mn}^{(1)} &
+\frac{\Delta_m^{(1)}}{\sqrt{N_m}} \overline{I_{nm}^{(1)}}
\cr
&+\frac{E_n^*}{\sqrt{N_n}} I_{mn}^{(2)}
+\frac{E_m^*}{\sqrt{N_m}} \overline{I_{nm}^{(2)}}
\cr
&+E_n^*D_n^{(1)}\sqrt{N_n}\,I_{mn}^{(1)}
+E_m^*D_m^{(1)}\sqrt{N_m}\,\overline{I_{nm}^{(1)}}
\cr
&+\sum_k E_k^*D_k^{(0)} I_{mk}^{(1)}\overline{I_{nk}^{(1)}} \;,
\end{aligned} \label{eq:R2moving}
\end{eqnarray}
for $m\neq n$. If we substitute the explicit first- and second-order energy shifts \eqref{eq:Delta1-moving} and \eqref{eq:Delta2-moving} into
equations \eqref{eq:I1} and \eqref{eq:I2}, then \eqref{eq:d2moving} and \eqref{eq:R2moving} become completely expressed in terms of the prescribed motion
\(s(t)\), the tangent \(v^i\), the covariant acceleration \(b^i\), the Green function \(G_0\), and the
spatial/energy derivatives of \(\Phi\).  This is the full second-order moving-delta analogue of the
fixed-center expansion.

We summarize the off-diagonal transition matrix elements obtained above in
the following proposition.

\begin{proposition}[Off-diagonal matrix elements for a moving delta center]
Let \(m\neq n\), and let
\[
    R_{mn}^{(r)}(t)
    =
    \langle \psi_m|\delta H^{(r)}(t)|\psi_n\rangle,
    \qquad r=1,2,
\]
be the off-diagonal matrix elements of the first- and second-order moving
center perturbations. Define
\[
    N_k:=-\Phi_E(E_k^*,a)>0,
\]
and, for \(r=0,1,2\),
\[
    I_{mk}^{(r)}(t)
    :=
    \int_{\mathcal M}
    \overline{\psi_m(x)}\,G_k^{(r)}(t;x)\,d\mu(x),
\]
with
\[
    I_{mk}^{(0)}=\sqrt{N_k}\,\delta_{mk}.
\]
Then the first-order off-diagonal matrix element is
\[
    R_{mn}^{(1)}(t)
    =
    \frac{E_n^*}{\sqrt{N_n}}\,I_{mn}^{(1)}(t)
    +
    \frac{E_m^*}{\sqrt{N_m}}\,
    \overline{I_{nm}^{(1)}(t)}.
\]
Equivalently, in terms of the Green function,
\[
\begin{aligned}
R_{mn}^{(1)}(t)
={}&
E_n^*
\frac{1}{\sqrt{-\Phi_E(E_n^*,a)}}
\int_{\mathcal M}\overline{\psi_m(x)}
\left[
    \Delta_n^{(1)}(t)\partial_EG_0(x,a|E_n^*)
    +s(t)v^i\partial_iG_0(x,a|E_n^*)
\right]d\mu(x)
\\
&+
E_m^*
\frac{1}{\sqrt{-\Phi_E(E_m^*,a)}}
\int_{\mathcal M}
\overline{
\left[
    \Delta_m^{(1)}(t)\partial_EG_0(y,a|E_m^*)
    +s(t)v^i\partial_iG_0(y,a|E_m^*)
\right]}
\psi_n(y)d\mu(y).
\end{aligned}
\]
The second-order off-diagonal matrix element is
\[
\begin{aligned}
R_{mn}^{(2)}(t)
={}&
\frac{\Delta_n^{(1)}(t)}{\sqrt{N_n}}\,I_{mn}^{(1)}(t)
+
\frac{\Delta_m^{(1)}(t)}{\sqrt{N_m}}\,
\overline{I_{nm}^{(1)}(t)}
\\
&+
\frac{E_n^*}{\sqrt{N_n}}\,I_{mn}^{(2)}(t)
+
\frac{E_m^*}{\sqrt{N_m}}\,
\overline{I_{nm}^{(2)}(t)}
\\
&+
E_n^*D_n^{(1)}(t)\sqrt{N_n}\,I_{mn}^{(1)}(t)
+
E_m^*D_m^{(1)}(t)\sqrt{N_m}\,
\overline{I_{nm}^{(1)}(t)}
\\
&+
\sum_k E_k^*D_k^{(0)}
I_{mk}^{(1)}(t)\overline{I_{nk}^{(1)}(t)} .
\end{aligned}
\]
Here
\[
    D_k^{(0)}=\frac1{N_k},
    \qquad
    D_k^{(1)}(t)=\frac{Q_k^{(1)}(t)}{(\Phi_E^k)^2},
\]
and the quantities \(\Delta_k^{(1)}(t)\), and \(\Delta_k^{(2)}(t)\) are given by Proposition \ref{movingpoleshifts}.
Thus the first-order off-diagonal transition is generated by the first
variation of the moving eigenprojector, while the second-order term receives
contributions from the second variation of the Green function, the first
variation of the normalization factor, and the quadratic product of two
first-order Green-function variations.
\end{proposition}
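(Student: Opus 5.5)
The plan is to start from the kernel expansions $h_k^{(1)}$ and $h_k^{(2)}$ of the instantaneous spectral terms and take matrix elements between $\psi_m$ and $\psi_n$ with $m\neq n$. Since each $h_k^{(r)}$ is a finite sum of products of the form (scalar)$\times G_k^{(p)}(x)\overline{G_k^{(q)}(y)}$, its matrix element factorizes as
\[
\iint \overline{\psi_m(x)}\,G_k^{(p)}(x)\overline{G_k^{(q)}(y)}\,\psi_n(y)\,d\mu(x)d\mu(y)=I_{mk}^{(p)}\,\overline{I_{nk}^{(q)}}.
\]
This reduces everything to the scalars $I^{(r)}_{mk}$, the pole shifts, and the $D_k^{(r)}$. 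The zeroth-order identity $I_{mk}^{(0)}=\sqrt{N_k}\,\delta_{mk}$ follows from \eqref{eq:psi-k}, because $G_k^{(0)}=\sqrt{N_k}\,\psi_k$, together with orthonormality of $\{\psi_k\}$. I will also use $D_k^{(0)}=1/N_k$ throughout.

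\textbf{First order.} Among the three terms of $h_k^{(1)}$, the first and third are proportional to $I_{mk}^{(0)}\overline{I_{nk}^{(0)}}=N_k\delta_{mk}\delta_{nk}$, which vanishes for $m\neq n$. In the middle term, $I_{mk}^{(1)}\overline{I_{nk}^{(0)}}$ survives only for $k=n$, contributing $E_n^*D_n^{(0)}\sqrt{N_n}\,I_{mn}^{(1)}=E_n^* I_{mn}^{(1)}/\sqrt{N_n}$. Likewise $I_{mk}^{(0)}\overline{I_{nk}^{(1)}}$ survives only for $k=m$, giving $E_m^*\overline{I_{nm}^{(1)}}/\sqrt{N_m}$. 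Summing over $k$ yields the stated $R^{(1)}_{mn}$. The Green-function form is then obtained by inserting \eqref{eq:I1} and $\sqrt{N_k}=\sqrt{-\Phi_E(E_k^*,a)}$.

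\textbf{Second order.} I go through the six groups in $h_k^{(2)}$ in the same way.
\begin{itemize}
\item The terms containing $\Delta_k^{(2)}D_k^{(0)}$, $\Delta_k^{(1)}D_k^{(1)}$, and $E_k^*D_k^{(2)}$ all multiply $I^{(0)}_{mk}\overline{I^{(0)}_{nk}}$, so they vanish off-diagonal.
\item The $\Delta_k^{(1)}(G^{(1)}G^{(0)}+G^{(0)}G^{(1)})D_k^{(0)}$ term produces $\Delta_n^{(1)}I_{mn}^{(1)}/\sqrt{N_n}+\Delta_m^{(1)}\overline{I_{nm}^{(1)}}/\sqrt{N_m}$.
\item The $E_k^*(G^{(2)}G^{(0)}+G^{(0)}G^{(2)})D_k^{(0)}$ pieces give $E_n^*I_{mn}^{(2)}/\sqrt{N_n}+E_m^*\overline{I_{nm}^{(2)}}/\sqrt{N_m}$.
\item The cross piece $E_k^*G^{(1)}\overline{G^{(1)}}D_k^{(0)}$ has no Kronecker delta, so it leaves the full sum $\sum_k E_k^*D_k^{(0)}I_{mk}^{(1)}\overline{I_{nk}^{(1)}}$.
\item The $E_k^*(G^{(1)}G^{(0)}+G^{(0)}G^{(1)})D_k^{(1)}$ group gives $E_n^*D_n^{(1)}\sqrt{N_n}I_{mn}^{(1)}+E_m^*D_m^{(1)}\sqrt{N_m}\overline{I_{nm}^{(1)}}$.
\end{itemize}
Collecting these reproduces \eqref{eq:R2moving}, with $D_k^{(1)}=Q_k^{(1)}/(\Phi_E^k)^2$ coming from the geometric-series expansion of $1/(-\partial_E\Phi)$ given above. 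Proposition \ref{movingpoleshifts} supplies the $\Delta$'s needed inside $I^{(r)}$.

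\textbf{Main obstacle.} The delicate step is justifying the factorization and the interchange of the $k$-sum with the integrals in the cross term $\sum_k E_k^*I_{mk}^{(1)}\overline{I_{nk}^{(1)}}/N_k$. This sum runs over all $k$, and $E_k^*$ grows. My first approach is to write $I^{(1)}_{mk}$ through the resolvent identity, which turns $\langle\psi_m|\partial_E G_0(\cdot,a|E_k^*)\rangle$ and $\langle\psi_m|\partial_iG_0(\cdot,a|E_k^*)\rangle$ into explicit expressions involving $\psi_m(a)$, $\partial_i\psi_m(a)$, and $(E_m^*-E_k^*)^{-1}$ and its powers. I would then argue convergence from this decay together with the heat-kernel estimates already invoked for \eqref{eq:Phi-series}. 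If this proves hard, I will interpret the sum as defining the matrix element formally, on the domain where the spectral representation \eqref{eq:H-static} holds, consistent with how the paper treats the other expansions.
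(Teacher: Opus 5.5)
Your proof is correct and follows the paper's own route. You expand $h_k=E_k^*(t)\,G_k\overline{G_k}\,D_k$ to second order, factorize each kernel term into $I^{(p)}_{mk}\overline{I^{(q)}_{nk}}$, and use $I^{(0)}_{mk}=\sqrt{N_k}\,\delta_{mk}$ and $D_k^{(0)}=1/N_k$ to kill or collapse the $k$-sums, which leaves only the unrestricted cross term $\sum_k E_k^*D_k^{(0)}I^{(1)}_{mk}\overline{I^{(1)}_{nk}}$.

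The paper never addresses convergence of that sum, so your formal fallback matches its level of rigor. If you do pursue the resolvent route, note that in $d=2,3$ the quantities $\psi_m(a)$ and $\partial_i\psi_m(a)$ are divergent. You must therefore work with finite differences instead; for example, $\langle G_0(\cdot,a|E_m^*),G_0(\cdot,a|E)\rangle=\Phi(E,a)/(E_m^*-E)$ gives $\langle\psi_m|\partial_EG_0(\cdot,a|E_k^*)\rangle=N_k/\bigl(\sqrt{N_m}\,(E_k^*-E_m^*)\bigr)$ for $k\neq m$.
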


\section{Representative examples}
\label{sec:examples}

\subsection{Smooth sinusoidal modulation of the renormalized strength}

The delta-kick calculation above is singular in time.  A smoother and physically more standard test of the same formalism is obtained by taking the renormalized parameter to oscillate sinusoidally:
\begin{equation}
\eta(t)=\epsilon\cos(\Omega t) \;.
\label{SSM1}    
\end{equation}
From equation \eqref{eq:Delta1-mu} given in Proposition \ref{Proppoleshifts}, we obtain
\begin{eqnarray}
\Delta_k^{(1)}(t)=\mathcal D_k\epsilon\cos(\Omega t),
\qquad
\mathcal D_k:=-\frac{\Phi_\mu^k}{\Phi_E^k} \;.
\label{SSM2}    
\end{eqnarray}
The diagonal first-order coefficient is the same quantity,
\begin{equation}
d_k^{(1)}(t)=\Delta_k^{(1)}(t)=\mathcal D_k\epsilon\cos(\Omega t) \;.
\label{SSM3}    
\end{equation}
This is the smooth analogue of the coefficient used in the delta-kick calculation.  

Similarly, from equation \eqref{eq:Delta2-mu}, we obtain
\begin{equation}
\Delta_k^{(2)}(t)=\mathcal C_k\epsilon^2\cos^2(\Omega t) \;,
\label{SSM4}    
\end{equation}
with
\begin{equation}
\mathcal C_k:=-\frac{1}{2\Phi_E^k}
\left[
\Phi_{EE}^k\mathcal D_k^2
+\Phi_{\mu\mu}^k
\right] \;.
\label{SSM5}    
\end{equation}
For the sinusoidal driving, equation \eqref{eq:Rmn-first-eta} becomes
\begin{equation}
R_{mn}^{(1)}(t)=\epsilon\mathcal V_{mn}\cos(\Omega t) \;,
\label{SSM6}    
\end{equation}
where
\begin{eqnarray}
\begin{aligned}
\mathcal V_{mn}=&\Bigg[
E_n^*
\frac{\int_\calM \partial_EG_0(x,a|E_n^*)\psi_m(x)\dd\mu(x)}
{\left(\partial_EG_0(a,a|E_n^*)\right)^{3/2}}
\cr
&\qquad +E_m^*
\frac{\int_\calM \partial_EG_0(a,x|E_m^*)\psi_n(x)\dd\mu(x)}
{\left(\partial_EG_0(a,a|E_m^*)\right)^{3/2}}
\Bigg]
\partial_\mu G_0(a,a|\mu_0) \;.
\end{aligned}
\label{SSM7}    
\end{eqnarray}
Assume the system is initially in the state \(\psi_l\), so \(c_n^{(0)}=\delta_{nl}\). 
Equation \eqref{eq:first-amplitude} gives, for \(m \neq l\),
\begin{eqnarray}
\begin{aligned}
c_m^{(1)}(t)
&=-\frac{\ii}{\hbar}\int_0^t
R_{ml}^{(1)}(t')\e^{\ii\omega_{ml}t'}\dd t'
\cr
&=-\frac{\ii\epsilon}{\hbar}\mathcal V_{ml}
\int_0^t\e^{\ii\omega_{ml}t'}\cos(\Omega t')\dd t' \;.
\end{aligned}
\label{SSM8}    
\end{eqnarray}
Expanding the cosine $\cos(\Omega t')=\frac12\left(\e^{\ii\Omega t'}+\e^{-\ii\Omega t'}\right)$, we obtain    
\begin{eqnarray}
\begin{aligned}
\int_0^t\e^{\ii\omega_{ml}t'}\cos(\Omega t')\dd t'
=&\frac12\int_0^t\e^{\ii(\omega_{ml}+\Omega)t'}\dd t'
+\frac12\int_0^t\e^{\ii(\omega_{ml}-\Omega)t'}\dd t'
\cr
=&\frac12\left[
\frac{\e^{\ii(\omega_{ml}+\Omega)t}-1}{\ii(\omega_{ml}+\Omega)}
+
\frac{\e^{\ii(\omega_{ml}-\Omega)t}-1}{\ii(\omega_{ml}-\Omega)}
\right].
\end{aligned}    
\end{eqnarray}
Therefore
\begin{eqnarray}
\begin{aligned}
c_m^{(1)}(t)=&-\frac{\ii\epsilon}{2\hbar}\mathcal V_{ml}
\left[
\frac{\e^{\ii(\omega_{ml}+\Omega)t}-1}{\ii(\omega_{ml}+\Omega)}
+
\frac{\e^{\ii(\omega_{ml}-\Omega)t}-1}{\ii(\omega_{ml}-\Omega)}
\right].
\end{aligned}
\label{SSM9}    
\end{eqnarray}
Equivalently, the amplitude contains the two resonance denominators \(\omega_{ml}+\Omega\) and \(\omega_{ml}-\Omega\).  If one of them vanishes, the corresponding fraction must be read in the limiting sense:
\begin{equation}
\lim_{\nu\to0}\frac{\e^{\ii\nu t}-1}{\ii\nu}=t.
\label{SSM10}    
\end{equation}
This is the smooth-driving analogue of the delta-kick result.  The delta-kick samples the phase at one instant; the sinusoidal drive accumulates phase and produces resonance denominators.

At second order we have
\begin{equation}
R_{ml}^{(2)}(t)=\mathcal W_{ml}\bigl(\eta(t)\bigr)^2
=\epsilon^2\cos^2(\Omega t) \mathcal W_{ml} \;,
\label{SSM11}    
\end{equation}
where the coefficient $\mathcal{W}_{ml}$ is obtained from the same second-order kernel already derived above and it is exactly the off-diagonal matrix element of the second-order Hamiltonian expansion \eqref{eq:H2}
\begin{eqnarray}
\begin{aligned}
     \mathcal{W}_{ml}= \sum_k \Bigg[
     \mathcal{C}_k \langle \psi_m|\bigl(P_k+ & E_k^*P_{k,E}\bigr) |\psi_l \rangle  \\ &  +\mathcal{D}_{k}^{2} \; \langle \psi_m|\left(P_{k,E}
        +\frac12E_k^* P_{k,EE} \right)|\psi_l\rangle 
    \Bigg] \;.
    \end{aligned}
\end{eqnarray}
The full kernels of \(P_{k,E}\) and \(P_{k,EE}\) have already been written explicitly in terms of Green functions and the matrix $\Phi$.

The transition amplitudes $c_m$ can be found by integrating the equation \eqref{eq:second-order-transition-equation} and is of the following form 
\begin{eqnarray}
c_m^{(2)}(t)=
c_{m,\mathrm{iter}}^{(2)}(t)
+c_{m,\mathrm{phase}}^{(2)}(t)
+c_{m,\mathrm{direct}}^{(2)}(t) \;.
\label{SSM14}    
\end{eqnarray}
The three labels indicate the origin of each second-order contribution.
The term \(c_{m,\mathrm{iter}}^{(2)}\) is generated by the time-ordered
iteration of two first-order off-diagonal matrix elements and therefore
contains the intermediate first-order amplitudes. The term
\(c_{m,\mathrm{phase}}^{(2)}\) comes from the first-order correction to the
diagonal dynamical phase, which multiplies the first-order off-diagonal
transition matrix element. Finally, \(c_{m,\mathrm{direct}}^{(2)}\) is due
to the genuinely second-order off-diagonal perturbation \(R^{(2)}\).
Thus the notation separates the three distinct mechanisms that contribute
to the same second-order transition amplitude. We now compute all three explicitly for \(\eta(t)=\epsilon\cos\Omega t\).

It is useful to introduce
\begin{equation}
\mathcal I(\nu;t):=\int_0^t\e^{\ii\nu s}\dd s
=\frac{\e^{\ii\nu t}-1}{\ii\nu},
\qquad
\mathcal I(0;t)=t \;. 
\label{SSM15}    
\end{equation}
Then
\begin{equation}
\mathcal F_{ml}(t):=\int_0^t\e^{\ii\omega_{ml}s}\cos(\Omega s)\dd s
=\frac12\left[\mathcal I(\omega_{ml}+\Omega;t)+\mathcal I(\omega_{ml}-\Omega;t)\right] \;.
\label{SSM16}    
\end{equation}
The first-order coefficient is simply
\begin{equation}
c_m^{(1)}(t)=-\frac{\ii\epsilon}{\hbar}\mathcal V_{ml}\mathcal F_{ml}(t) \;.
\label{SSM17}    
\end{equation}
The first term in Equation \eqref{SSM14} is
\begin{eqnarray}
\begin{aligned}
c_{m,\mathrm{iter}}^{(2)}(t)
=&-\frac{\ii}{\hbar}\sum_n \int_0^t
R_{mn}^{(1)}(s)\e^{\ii\omega_{mn}s}c_n^{(1)}(s)\dd s
\cr
=&-\frac{\ii}{\hbar}\sum_n \int_0^t
\epsilon\mathcal V_{mn}\cos(\Omega s)\e^{\ii\omega_{mn}s}
\left[-\frac{\ii\epsilon}{\hbar}\mathcal V_{nl}\mathcal F_{nl}(s)\right]\dd s \;.
\end{aligned}    
\end{eqnarray}
Then, 
\begin{equation}
 c_{m,\mathrm{iter}}^{(2)}(t)
=-\frac{\epsilon^2}{\hbar^2}
\sum_n \mathcal V_{nl}\mathcal V_{mn}\,
\mathcal J_{mnl}(t) \;,
\label{SSM18}    
\end{equation}
where
\begin{equation}
\mathcal J_{mnl}(t):=\int_0^t
\e^{\ii\omega_{mn}s}\cos(\Omega s)\mathcal F_{nl}(s)\dd s \;.
\label{SSM19}    
\end{equation}
The explicit form of $\mathcal{F}_{nl}$ can be found
\begin{equation}
\mathcal F_{nl}(s)=\frac12\sum_{\sigma=\pm1}
\frac{\e^{\ii(\omega_{nl}+\sigma\Omega)s}-1}
{\ii(\omega_{nl}+\sigma\Omega)} \;.    
\end{equation}
Using $\cos(\Omega s)=\frac12\sum_{\rho=\pm1}\e^{\ii\rho\Omega s}$, we get
\begin{eqnarray}
\begin{aligned}
\mathcal J_{mnl}(t)
=&\frac14\sum_{\rho=\pm1}\sum_{\sigma=\pm1}
\frac{1}{\ii(\omega_{nl}+\sigma\Omega)}
\cr
&\times\int_0^t\left[
\e^{\ii[\omega_{nl}+\rho\Omega+\omega_{mn}+\sigma\Omega]s}
-\e^{\ii(\omega_{mn}+\rho\Omega)s}
\right]\dd s \;.
\end{aligned}    
\end{eqnarray}
Since \(\omega_{mn}+\omega_{nl}=\omega_{ml}\), this becomes
\begin{eqnarray}
\mathcal J_{mnl}(t)=
\frac14\sum_{\rho=\pm1}\sum_{\sigma=\pm1}
\frac{
\mathcal I(\omega_{ml}+(\rho+\sigma)\Omega;t)
-\mathcal I(\omega_{mn}+\rho\Omega;t)
}{\ii(\omega_{nl}+\sigma\Omega)} \;.
\label{SSM20}    
\end{eqnarray}
This is the fully expanded smooth-driving version of the two-step transition
\(l \to n \to m\).

The second term in Equation \eqref{SSM14} is
\begin{eqnarray}
\begin{aligned}
c_{m,\mathrm{phase}}^{(2)}(t)=&
-\frac{\ii}{\hbar}\int_0^t
R_{ml}^{(1)}(s)\e^{\ii\omega_{ml}s}
\left[\frac{\ii}{\hbar}\int_0^s(d_m^{(1)}(r)-d_l^{(1)}(r))\dd r\right]\dd s \;.
\end{aligned}    
\end{eqnarray}
Using \eqref{SSM3}, we have $d_m^{(1)}(r)-d_l^{(1)}(r)
=\epsilon(\mathcal D_m-\mathcal D_l)\cos(\Omega r)$. Therefore
\begin{equation}
\int_0^s(d_m^{(1)}(r)-d_l^{(1)}(r))\dd r
=\epsilon(\mathcal D_m-\mathcal D_l)\frac{\sin(\Omega s)}{\Omega} \;.
\label{SSM21}    
\end{equation}
Substituting into the phase term gives
\begin{eqnarray}
\begin{aligned}
c_{m,\mathrm{phase}}^{(2)}(t)
=&\frac{\epsilon^2}{\hbar^2}
\mathcal V_{ml}(\mathcal D_m-\mathcal D_l)
\int_0^t\e^{\ii\omega_{ml}s}\cos(\Omega s)
\frac{\sin(\Omega s)}{\Omega}\dd s \;.
\end{aligned}    
\end{eqnarray}
Since $\cos(\Omega s)\sin(\Omega s)=\frac12\sin(2\Omega s)$, we get
\begin{eqnarray}
\begin{aligned}
\int_0^t\e^{\ii\omega s}\cos(\Omega s)\frac{\sin(\Omega s)}{\Omega}\dd s
&=\frac{1}{2\Omega}\int_0^t\e^{\ii\omega s}\sin(2\Omega s)\dd s
\cr
&=\frac{1}{4\ii\Omega}
\left[\mathcal I(\omega+2\Omega;t)-\mathcal I(\omega-2\Omega;t)\right].
\end{aligned}    
\end{eqnarray}
Thus
\begin{equation}
 c_{m,\mathrm{phase}}^{(2)}(t)
=\frac{\epsilon^2}{\hbar^2}
\mathcal V_{ml}(\mathcal D_m-\mathcal D_l)
\frac{
\mathcal I(\omega_{ml}+2\Omega;t)-\mathcal I(\omega_{ml}-2\Omega;t)
}{4\ii\Omega} \;.
\label{SSM22}    
\end{equation}
This term vanishes if the two channels have the same first-order diagonal shift, \(\mathcal D_m=\mathcal D_l\).

The third term in Equation \eqref{SSM14} is
\begin{equation}
 c_{m,\mathrm{direct}}^{(2)}(t)
=-\frac{\ii}{\hbar}\int_0^t
R_{ml}^{(2)}(s)\e^{\ii\omega_{ml}s}\dd s \;.    
\end{equation}
Using \eqref{SSM11}, we find
\begin{eqnarray}
\begin{aligned}
 c_{m,\mathrm{direct}}^{(2)}(t)
 = & -\frac{\ii\epsilon^2}{\hbar}\mathcal W_{ml}
\int_0^t\e^{\ii\omega_{ml}s}\cos^2(\Omega s)\dd s \cr = &   -\frac{\ii\epsilon^2}{\hbar}\mathcal W_{ml}
\left[
\frac12\mathcal I(\omega_{ml};t)
+\frac14\mathcal I(\omega_{ml}+2\Omega;t)
+\frac14\mathcal I(\omega_{ml}-2\Omega;t)
\right] \;.   
\end{aligned}
\label{SSM23}
\end{eqnarray}
Combining the three pieces gives, for \(m\neq l\),
\begin{eqnarray}
\begin{aligned}
c_m^{(2)}(t)=&
-\frac{\epsilon^2}{\hbar^2}
\sum_l\mathcal V_{mn}\mathcal V_{nl}\mathcal J_{mnl}(t)
\cr
&+\frac{\epsilon^2}{\hbar^2}
\mathcal V_{ml}(\mathcal D_m-\mathcal D_l)
\frac{
\mathcal I(\omega_{ml}+2\Omega;t)-\mathcal I(\omega_{ml}-2\Omega;t)
}{4\ii\Omega}
\cr
&-\frac{\ii\epsilon^2}{\hbar}\mathcal W_{ml}
\left[
\frac12\mathcal I(\omega_{ml};t)
+\frac14\mathcal I(\omega_{ml}+2\Omega;t)
+\frac14\mathcal I(\omega_{ml}-2\Omega;t)
\right].
\end{aligned}
\label{SSM24}    
\end{eqnarray}
Together with \eqref{SSM9}, this is the explicit first- and second-order response to a smooth sinusoidal modulation of the renormalized parameter.

\subsection{A renormalized point interaction moving along a great circle on $S_R^2$}

This example applies the time-dependent perturbation theory developed above to a point interaction moving along a geodesic on the two-dimensional sphere $\mathcal{M}=S_R^2$ of radius $R$. 
\begin{equation}
H_0=-\frac{\hbar^2}{2M}\Delta_{S_R^2} \;,    
\end{equation}
where $M$ denotes the particle mass.  The support of the renormalized delta interaction moves along a great circle and we choose the initial point to be the north pole $N$ (see Fig. \ref{fig:moving-point-sphere-minimal}) and write
\begin{equation}
q(t)=\gamma(s(t)),
\qquad
\gamma(0)=N,
\qquad
|\dot\gamma(0)|=1,
\qquad
s(t)=\varepsilon\cos(\Omega t) \;.
\label{SPH2}    
\end{equation}

\begin{figure}[t]
\centering
\begin{tikzpicture}[scale=2.2, >=Latex]

    \def\R{1.6}
    \def\ang{72}

    \draw[thick] (0,0) circle (\R);

    \fill (0,0) circle (0.6pt);
    \node[below left] at (0,0) {$O$};

    \coordinate (N) at (0,\R);
    \fill (N) circle (0.8pt);
    \node[above right] at (N) {$N$};

    \coordinate (Q) at ({\R*cos(\ang)},{\R*sin(\ang)});
    \fill (Q) circle (0.8pt);
    \node[above right] at (Q) {$q(s)$};

    \draw[thin] (0,0) -- (N);
    \draw[thin] (0,0) -- (Q);
    \node[right] at ($(0,0)!0.55!(N)$) {$R$};

    \draw[very thick] (90:\R) arc[start angle=90,end angle=\ang,radius=\R];
    \draw[->, thick] (N) -- ++(0.45,0) node[above] {$v$};

    \node at (0.34,1.47) {$s$};

\end{tikzpicture}
\caption{Cross-sectional illustration of the support point \(q(s)\) moving
from the north pole \(N\) along a great circle on \(S_R^2\).}
\label{fig:moving-point-sphere-minimal}
\end{figure}

Because $\gamma$ is a great circle parametrized by arclength,
\begin{equation}
\frac{D\dot\gamma}{Ds}=0 \;.
\label{SPH3}    
\end{equation}
This removes the acceleration term in the covariant Taylor expansion. The eigenvalues of $H_0$ are
\begin{equation}
E_l=\frac{\hbar^2}{2MR^2}l(l+1) \;,
\qquad l=0,1,2,\ldots,
\label{SPH4}    
\end{equation}
with degeneracy $2l+1$.  If $Y_{lm}$ are the usual spherical harmonics normalized on the unit sphere, then the normalized eigenfunctions on $S_R^2$ are
\begin{eqnarray}
\phi_{lm}(\theta,\varphi)=\frac{1}{R}Y_{lm}(\theta,\varphi),
\qquad
\int_{S_R^2}\overline{\phi_{lm}}\phi_{l'm'}\,d\Omega_R=\delta_{ll'}\delta_{mm'} \;.
\label{SPH5}    
\end{eqnarray}
Here $d\Omega_R=R^2\sin\theta\,d\theta d\varphi$. At the north pole $N$ one has
\begin{eqnarray}
\phi_{lm}(N)=0\quad(m\neq0),
\qquad
\phi_{l0}(N)=\frac{1}{R}\sqrt{\frac{2l+1}{4\pi}} \;.
\label{SPH7}    
\end{eqnarray}
This is the first place where the degeneracy is resolved by the point interaction: at the north pole the point interaction sees only the zonal component $m=0$ of each free eigenspace. The eigenfunctions of $H_0$ that vanish at the position of the delta center are unaffected by the point interaction \cite{annals23}.
The eigenfunction expansion of the free Green function is given by
\begin{eqnarray}
G_0(x,q|E)=\sum_{l=0}^{\infty}\sum_{m=-l}^{l}
\frac{\phi_{lm}(x)\overline{\phi_{lm}(q)}}{E_l-E} \;.
\label{SPH8}    
\end{eqnarray}
The renormalized function $\Phi$ may be written spectrally as
\begin{eqnarray}
\Phi(E,q)=\sum_{l=0}^{\infty}\sum_{m=-l}^{l}
|\phi_{lm}(q)|^2
\left(\frac{1}{E_l+\mu}-\frac{1}{E_l-E}\right) \;.
\label{SPH9}    
\end{eqnarray}
By the addition theorem for spherical harmonics \cite{HassaniMathPhys}, we obtain
\begin{eqnarray}
\sum_{m=-l}^{l}|\phi_{lm}(q)|^2 = \frac{1}{R^2} \sum_{m=-l}^{l}|Y_{lm}(q)|^2=\frac{2l+1}{4\pi R^2} \;,
\label{SPH10}    
\end{eqnarray}
independently of point $q$ on the sphere.  Substituting this into \eqref{SPH9} gives
\begin{equation}
\Phi(E,q)=\Phi(E)=
\sum_{l=0}^{\infty}\frac{2l+1}{4\pi R^2}
\left(\frac{1}{E_l+\mu}-\frac{1}{E_l-E}\right).
\label{SPH12}    
\end{equation}
Therefore the derivatives of $\Phi$ with respect to the support point vanish:
\begin{eqnarray}
\nabla_i\Phi(E,q)=0,
\qquad
\nabla_i\nabla_j\Phi(E,q)=0 \;.
\label{SPH13}    
\end{eqnarray}
The energy roots of the point-interaction sector are determined by $\Phi(E_\alpha^*)=0$.
Here the abstract state label \(\alpha\) denotes the pair of angular-momentum quantum numbers on the sphere, $\alpha=(l,m)$, $l=0,1,2,\ldots$, and  $m=-l,\ldots,l$. 
The corresponding renormalized point-interaction eigenfunction centered at the north pole $N$ is
\begin{eqnarray}
\psi_\alpha(x)=\frac{G_0(x,N|E_\alpha^*)}{\sqrt{N_\alpha}}
=\frac{1}{\sqrt{N_\alpha}}
\sum_{l=0}^{\infty}
\frac{\phi_{l0}(x)\overline{\phi_{l0}(N)}}{E_l-E_\alpha^*} \;.
\label{SPH16}    
\end{eqnarray}
The normalization factor is given by
\begin{equation}
N_\alpha=-\Phi_E(E_\alpha^*)=
\sum_{l=0}^{\infty}\frac{2l+1}{4\pi R^2}\frac{1}{(E_l-E_\alpha^*)^2} \;.
\label{SPH15}    
\end{equation}
This state $\psi_\alpha$ is purely zonal: it contains only $m=0$ modes.

As the point interaction moves with $q(t)=\gamma(s(t))$, $E_{\alpha}^{*}$ flows according to
\begin{equation}
\Phi(E_\alpha^*(t),q(t))=0 \;.
\label{SPH17}    
\end{equation}
Using the same moving-center expansion, $E_\alpha^*(t)=E_\alpha^*+\Delta_\alpha^{(1)}(t)+\Delta_\alpha^{(2)}(t)+\cdots$, the first-order Taylor term in equation \eqref{SPH17} gives
\begin{eqnarray}
0=\Phi_E(E_\alpha^*)\Delta_\alpha^{(1)}(t)
+s(t)\dot\gamma^i(0)\nabla_i\Phi(E_\alpha^*,N) \;.
\label{SPH19}    
\end{eqnarray}
But \eqref{SPH13} gives $\nabla_i\Phi=0$, so $\Delta_\alpha^{(1)}(t)=0$.

At second order one obtains
\begin{eqnarray}
\begin{aligned}
0={}&\Phi_E \Delta_\alpha^{(2)}
+\frac12\Phi_{EE}\left(\Delta_\alpha^{(1)}\right)^2
+s(t)\dot\gamma^i \partial_i \Phi_{E} \Delta_\alpha^{(1)}
\cr
&+\frac{s(t)^2}{2}\dot\gamma^i\dot\gamma^j\nabla_i\nabla_j\Phi
+\frac{s(t)^2}{2}\frac{D\dot\gamma^i}{Ds}\nabla_i\Phi \;.
\end{aligned}
\label{SPH21}    
\end{eqnarray}
On the sphere with a great-circle trajectory,
\begin{equation}
\Delta_\alpha^{(1)}=0,
\qquad
\nabla_i\Phi=0,
\qquad
\nabla_i\nabla_j\Phi=0,
\qquad
\frac{D\dot\gamma^i}{Ds}=0 \;.
\label{SPH22}    
\end{equation}
Therefore every term except $\Phi_E \Delta_\alpha^{(2)}$ vanishes, and since $\Phi_E\neq0$ at a simple pole, $\Delta_\alpha^{(2)}(t)=0$.
In fact, by rotational symmetry the point-interaction energy levels are independent of the center position to all orders.  The important point is that this does not mean that the time-dependent problem is trivial.  The eigenvalues do not move, but the eigenfunctions and projectors do.

Let us choose the great circle through the north pole with $\varphi=0$.  Since $s=R\theta$ near the north pole,
\begin{equation}
\frac{d}{ds}=\frac{1}{R}\frac{d}{d\theta}\quad(\varphi=0).
\label{SPH24}    
\end{equation}
The Green function centered at the moving point is
\begin{equation}
G_0(x,\gamma(s)|E)=
G_0(x,N|E)
+s\,\partial_sG_0(x,N|E)
+\frac{s^2}{2}\partial_s^2G_0(x,N|E)+\cdots.
\label{SPH25}    
\end{equation}
From \eqref{SPH8},
\begin{equation}
\partial_sG_0(x,N|E)= \partial_sG_0(x,q(s)|E)|_{s=0} = 
\sum_{l,m}\frac{\phi_{lm}(x)\partial_s\overline{\phi_{lm}(N)}}{E_l-E} \;.    
\end{equation}
The required north-pole derivatives are obtained from the small-$\theta$ behavior of the spherical harmonics.

We use the standard Condon--Shortley convention for the spherical harmonics \cite{HassaniMathPhys},
namely
\begin{equation}
    Y_{lm}(\theta,\varphi)
    =
    \left[
    \frac{2l+1}{4\pi}
    \frac{(l-m)!}{(l+m)!}
    \right]^{1/2}
    P_l^m(\cos\theta)e^{im\varphi},
\end{equation}
where
\begin{equation}
    P_l^m(x)
    =
    (-1)^m(1-x^2)^{m/2}
    \frac{d^m}{dx^m}P_l(x) \;.
\end{equation}
For \(m=1\), the Condon--Shortley convention gives
$P_l^1(\cos\theta)
    =
    -\sin\theta\,P_l'(\cos\theta)$. Using $\sin\theta=\theta+O(\theta^3)$, and 
$P_l'(1)=\frac{l(l+1)}2$, we obtain, as \(\theta\to0\),
\begin{equation}
    P_l^1(\cos\theta)
    =
    -\frac{l(l+1)}2\,\theta+O(\theta^3) \;.
\end{equation}
Therefore
\begin{equation}
    Y_{l1}(\theta,0)
    =
    -\frac12
    \left[
    \frac{(2l+1)l(l+1)}{4\pi}
    \right]^{1/2}
    \theta
    +O(\theta^3) \;.
\end{equation}
Since \(\phi_{lm}=Y_{lm}/R\) and
\(\partial_s=R^{-1}\partial_\theta\), it follows that
\begin{equation}
    \partial_s\overline{\phi_{l1}(N)}
    =
    -\frac1{2R^2}
    \left[
    \frac{(2l+1)l(l+1)}{4\pi}
    \right]^{1/2} \;.
\end{equation}
For \(m=-1\), the Condon--Shortley relation $Y_{l,-m}=(-1)^m\overline{Y_{lm}}$ implies
$Y_{l,-1}=-\overline{Y_{l1}}$, and $\overline{Y_{l,-1}}=-Y_{l1}$. Hence
\begin{equation}
    \partial_s\overline{\phi_{l,-1}(N)}
    =
    \frac1{2R^2}
    \left[
    \frac{(2l+1)l(l+1)}{4\pi}
    \right]^{1/2} \;.
\end{equation}
For \(m=0\), the first derivative vanishes because
\(Y_{l0}\) is proportional to \(P_l(\cos\theta)\), whose
\(\theta\)-derivative contains a factor \(\sin\theta\). For
\(|m|\geq2\), one has
\(P_l^{|m|}(\cos\theta)=O(\theta^{|m|})\), so the first derivative at
\(\theta=0\) also vanishes. Thus
\begin{equation}
    \partial_s\overline{\phi_{lm}(N)}=0,
    \qquad m\neq \pm1 \;.
\end{equation}
Defining
\begin{equation}
    B_l
    :=
    \frac1{2R^2}
    \sqrt{
    \frac{(2l+1)l(l+1)}{4\pi}} \;, \label{SPH28}    
\end{equation}
we finally obtain
\begin{equation}
    \partial_s\overline{\phi_{l1}(N)}=-B_l,
    \qquad
    \partial_s\overline{\phi_{l,-1}(N)}=B_l,
    \qquad
    \partial_s\overline{\phi_{lm}(N)}=0
    \quad (m\neq\pm1) \;. \label{SPH27}    
\end{equation}
Hence
\begin{equation}
\partial_sG_0(x,N|E)=
\sum_{l=1}^{\infty}\frac{B_l}{E_l-E}
\left(\phi_{l,-1}(x)-\phi_{l1}(x)\right) \;.
\label{SPH29}    
\end{equation}
This equation gives the first selection rule: the first derivative of a zonal Green function at the north pole lives entirely in the $m=\pm1$ sector.

For the second derivative we similarly have
\begin{equation}
\partial_s^2G_0(x,N|E)=
\sum_{l,m}\frac{\phi_{lm}(x)\partial_s^2\overline{\phi_{lm}(N)}}{E_l-E} \;.
\label{SPH30}    
\end{equation}
To determine which terms contribute, we use the local behaviour of the
spherical harmonics near the north pole. Since
\begin{equation}
    Y_{lm}(\theta,\varphi)\sim \theta^{|m|}e^{im\varphi}
    \qquad
    (\theta\to0) \;,
\end{equation}
the second derivative at \(\theta=0\) can be nonzero only for
\(|m|=0\) or \(|m|=2\). The \(m=\pm1\) terms are linear in \(\theta\) to
leading order and therefore have vanishing second derivative at
\(\theta=0\), while all terms with \(|m|\geq3\) vanish to order at least
\(\theta^3\).

The $m=0$ piece is already enough to see the second-order return to the zonal sector.  Since
\begin{equation}
Y_{l0}(\theta)=\sqrt{\frac{2l+1}{4\pi}}P_l(\cos\theta),
\qquad
\frac{d^2}{d\theta^2}P_l(\cos\theta)\bigg|_{\theta=0}=-\frac{l(l+1)}{2} \;,
\label{SPH31}    
\end{equation}
one obtains
\begin{equation}
\partial_s^2\overline{\phi_{l0}(N)}
=-\frac{l(l+1)}{2R^3}\sqrt{\frac{2l+1}{4\pi}} \;.
\label{SPH32}    
\end{equation}
The explicit $m=0$ part is
\begin{equation}
\left[\partial_s^2G_0(x,N|E)\right]_{m=0}
=-\sum_{l=0}^{\infty}
\frac{l(l+1)}{2R^3}\sqrt{\frac{2l+1}{4\pi}}
\frac{\phi_{l0}(x)}{E_l-E} \;.
\label{SPH34}    
\end{equation}
There are also $m=\pm2$ second-derivative components, because $Y_{l,\pm2}\sim \theta^2e^{\pm2i\varphi}$ near the north pole.  
For \(m=2\),
\begin{equation}
    P_l^2(\cos\theta)
    =
    \sin^2\theta\,P_l''(\cos\theta) \;.
\end{equation}
Since
\begin{equation}
    \sin^2\theta=\theta^2+O(\theta^4),
    \qquad
    P_l''(1)=\frac{(l+2)(l+1)l(l-1)}{8} \;,
\end{equation}
we obtain
\begin{equation}
    \partial_s^2\overline{\phi_{l2}(N)}
    =
    C_l,
    \qquad
    C_l:=
    \frac{1}{4R^3}
    \sqrt{
    \frac{(2l+1)(l-1)l(l+1)(l+2)}{4\pi}
    } \;.
\end{equation}
Moreover, by the Condon--Shortley relation
\(Y_{l,-2}=\overline{Y_{l2}}\), so $\partial_s^2\overline{\phi_{l,-2}(N)}=C_l$. Hence
\begin{equation}
    \left[\partial_s^2G_0(x,N|E)\right]_{m=\pm2}
    =
    \sum_{l=2}^{\infty}
    \frac{C_l}{E_l-E}
    \left(
        \phi_{l2}(x)+\phi_{l,-2}(x)
    \right) \;.
\end{equation}
Thus the second derivative contains an $m=0$ part and an $m=\pm2$ part:
\begin{equation}
\partial_s^2G_0(x,N|E)=
\left[\partial_s^2G_0(x,N|E)\right]_{m=0}
+
\left[\partial_s^2G_0(x,N|E)\right]_{m=\pm2} \;.
\label{SPH33}    
\end{equation}
This is the second selection rule: second order contains an $m=0$ component, so it can couple back into the zonal point-interaction sector.

The free level $E_l$ has multiplicity $2l+1$.  The point interaction at $N$ does not couple to all of this degenerate space.  Indeed, for any vector
\begin{equation}
f(x)=\sum_{m=-l}^{l}c_m\phi_{lm}(x) \;,
\label{SPH35}    
\end{equation}
we have $f(N)= c_0 \phi_{l0}(N)$. Therefore the subspace
$\mathcal H_l^{\rm dec}(N)=
\{f\in\operatorname{span}(\phi_{l,-l},\ldots,\phi_{ll})\,:\,f(N)=0\}$ has dimension $2l$ and remains unaffected by the point interaction at $N$.  Equivalently, in the north-pole basis it is spanned by the $m\neq0$ modes.

The one-dimensional complementary direction in the free eigenspace is the zonal direction $m=0$.  The renormalized point-interaction states are not simply the individual $\phi_{l0}$, but the Green-function combinations \eqref{SPH16}.  This is why ordinary non-degenerate perturbation theory cannot be applied blindly to all spherical harmonics.  The correct interpretation is:
\begin{itemize}
\item the point-interaction roots $E_\alpha^*$ are simple in the coupled zonal sector;
\item the unaffected free subspaces are degenerate and should be treated in an adapted basis;
\item first-order motion along the chosen great circle couples the zonal point-interaction sector only to the $m=\pm1$ part of the unaffected free sector;
\item the orthogonal combinations inside the degenerate subspace remain uncoupled at this order.
\end{itemize}

Since the energies do not shift, the first-order perturbation comes entirely from the moving projector, equivalently from the first derivative of the Green function.

Let the initial state be the point-interaction eigenstate \(\psi_\alpha\), and let the final state be a free spherical harmonic \(\phi_{LM}\) with \(M\neq0\). Because the point interaction is initially supported at the north pole, and because $\phi_{LM}(N)=0,\qquad M\neq0$ these spherical harmonics are not affected by the static point interaction. Consequently, they remain eigenfunctions of the static renormalized Hamiltonian \(H(0)\), with the same eigenvalues as for the free Hamiltonian: $H(0)\phi_{LM}=E_L\phi_{LM}$, with $M\neq0$. Thus, although \(\phi_{LM}\) is a free spherical harmonic, it may also be used as an eigenfunction of \(H(0)\) in the off-diagonal perturbation formula.
If
\begin{equation}
    H(s)\psi_\alpha(s)=E_\alpha^*(s)\psi_\alpha(s),
    \qquad
    H(0)\phi_{LM}=E_L\phi_{LM} \;,
\end{equation}
then differentiating the first equation with respect to \(s\), evaluating at
\(s=0\), and taking the inner product with \(\phi_{LM}\) gives
\begin{equation}
    \langle\phi_{LM}|\partial_sH|\psi_\alpha\rangle
    =
    (E_\alpha^*-E_L)
    \langle\phi_{LM}|\partial_s\psi_\alpha\rangle \;,
\end{equation}
for \(\phi_{LM}\perp\psi_\alpha\). Since
\(\delta H^{(1)}(t)=s(t)\partial_sH|_{s=0}\), we obtain
\begin{equation}
\langle \phi_{LM}|\delta H^{(1)}|\psi_\alpha\rangle
=s(t)(E_\alpha^*-E_L)
\langle\phi_{LM}|\partial_s\psi_\alpha\rangle \;.
\label{SPH38}    
\end{equation}
But
\begin{equation}
\partial_s\psi_\alpha(x)=
\frac{\partial_sG_0(x,N|E_\alpha^*)}{\sqrt{N_\alpha}} \;,
\label{SPH39}    
\end{equation}
because $N_\alpha$ is position independent on the sphere. Therefore
\begin{equation}
\langle\phi_{LM}|\partial_s\psi_\alpha\rangle
=\frac{1}{\sqrt{N_\alpha}}
\frac{\partial_s\overline{\phi_{LM}(N)}}{E_L-E_\alpha^*} \;.
\label{SPH40}    
\end{equation}
Substituting this into equation \eqref{SPH38} yields
\begin{equation}
\langle \phi_{LM}|\delta H^{(1)}|\psi_\alpha\rangle
=-\frac{s(t)}{\sqrt{N_\alpha}}\partial_s\overline{\phi_{LM}(N)} \;.
\label{SPH42}    
\end{equation}
Using equation \eqref{SPH27},
\begin{equation}
\langle \phi_{L1}|\delta H^{(1)}|\psi_\alpha\rangle
=-\frac{s(t)B_L}{\sqrt{N_\alpha}} \;,
\qquad
\langle \phi_{L,-1}|\delta H^{(1)}|\psi_\alpha\rangle
=+\frac{s(t)B_L}{\sqrt{N_\alpha}} \;,
\label{SPH43}    
\end{equation}
and all $M\neq\pm1$ matrix elements vanish at first order:
\begin{equation}
\langle \phi_{LM}|\delta H^{(1)}|\psi_\alpha\rangle=0 \;,
\quad(M\neq\pm1) \;.
\label{SPH44}    
\end{equation}
Thus the first-order transition amplitude for $s(t)=\varepsilon\cos\Omega t$ is
\begin{equation}
 c_{LM}^{(1)}(t)=
-\frac{\ii}{\hbar} \mathcal{V}_{LM,\alpha}
\int_0^t \varepsilon\cos(\Omega t')
\exp\left[\frac{\ii}{\hbar}(E_L-E_\alpha^*)t'\right]dt',
\label{SPH45}    
\end{equation}
where
\begin{eqnarray}
\mathcal{V}_{L1,\alpha}=-\frac{B_L}{\sqrt{N_\alpha}},
\qquad
\mathcal{V}_{L,-1,\alpha}=+\frac{B_L}{\sqrt{N_\alpha}},
\qquad
\mathcal{V}_{LM,\alpha}=0\quad(M\neq\pm1).
\label{SPH46}    
\end{eqnarray}
The time integral is
\begin{eqnarray}
\int_0^t\cos(\Omega t')e^{\ii\omega t'}dt'
=\frac12\left[
\frac{e^{\ii(\omega+\Omega)t}-1}{\ii(\omega+\Omega)}
+\frac{e^{\ii(\omega-\Omega)t}-1}{\ii(\omega-\Omega)}
\right],
\label{SPH47}    
\end{eqnarray}
with $\omega=(E_L-E_\alpha^*)/\hbar$.

The first-order transition between two point-interaction states vanishes.  Indeed, $\langle\psi_\beta|\partial_s\psi_\alpha\rangle=0$,
because $\psi_\beta$ is zonal while $\partial_s\psi_\alpha$ lies in the $m=\pm1$ sector.  Therefore
\begin{equation}
\langle\psi_\beta|\delta H^{(1)}|\psi_\alpha\rangle=0 \;.
\label{SPH49}    
\end{equation}

At second order there are two kinds of contributions. First, there is a direct second-order moving-center term, coming from the second derivative of the Green function along the great circle:
\begin{equation}
\delta H^{(2)}_{\beta \alpha,\rm direct}(t)
=\frac{s(t)^2}{2}\langle \psi_\beta|\partial_s^2H|\psi_\alpha\rangle \;.
\label{SPH50}    
\end{equation}
For a final point-interaction state $\psi_\beta$, only the $m=0$ part of $\partial_s^2G_0$ contributes.  In the same fixed north-pole basis,
\begin{eqnarray}
\langle\psi_\beta|\partial_s^2\psi_\alpha\rangle_{m=0}
=\frac{1}{\sqrt{N_\alpha N_\beta}}
\sum_{l=0}^{\infty}
\frac{|\phi_{l0}(N)|^2}{E_l-E_\beta^*}
\left(\frac{-l(l+1)}{2R^2}\right)
\frac{1}{E_l-E_\alpha^*} \;,
\label{SPH51}    
\end{eqnarray}
where we have used 
\begin{equation}
\partial_s^2\overline{\phi_{l0}(N)}
=-\frac{l(l+1)}{2R^2}\overline{\phi_{l0}(N)} \;,
\label{SPH52}    
\end{equation}
which is the same as equation \eqref{SPH32} written in terms of the normalized value $\phi_{l0}(N)$.
The $m=\pm2$ part of $\partial_s^2G_0$ instead couples the point-interaction state to the $m=\pm2$ free sector.

Second, there is the iterated first-order contribution through the $m=\pm1$ free states:
\begin{eqnarray}
\begin{aligned}
 c_{\beta,\rm iter}^{(2)}(t)=
-&\frac{1}{\hbar^2}
\sum_{L=1}^{\infty}\sum_{M=\pm1}
\mathcal{V}_{\beta,LM} \mathcal{V}_{LM,\alpha}
\cr
&\times\int_0^t dt_1\,s(t_1)
\exp\left[\frac{\ii}{\hbar}(E_\beta^*-E_L)t_1\right]
\int_0^{t_1}dt_2\,s(t_2)
\exp\left[\frac{\ii}{\hbar}(E_L-E_\alpha^*)t_2\right] \;.
\end{aligned}
\label{SPH53}    
\end{eqnarray}
Here
\begin{equation}
\mathcal{V}_{LM,\alpha}=-\frac{\partial_s\overline{\phi_{LM}(N)}}{\sqrt{N_\alpha}} \;,
\qquad
\mathcal{V}_{\beta,LM}=-\frac{\partial_s\phi_{LM}(N)}{\sqrt{N_\beta}} \;,
\label{SPH54}    
\end{equation}
with the signs following from the same energy-denominator cancellation as in equation \eqref{SPH42}.

The full second-order amplitude from $\psi_\alpha$ to a point-interaction state $\psi_\beta$ is therefore
\begin{equation}
 c_\beta^{(2)}(t)=c_{\beta,\rm direct}^{(2)}(t)+c_{\beta,\rm iter}^{(2)}(t),
\label{SPH55}    
\end{equation}
where
\begin{eqnarray}
c_{\beta,\rm direct}^{(2)}(t)=
-\frac{\ii}{\hbar}
\int_0^t
\delta H_{\beta\alpha,\rm direct}^{(2)}(t')
\exp\left[\frac{\ii}{\hbar}(E_\beta^*-E_\alpha^*)t'\right]dt' \;.
\label{SPH56}    
\end{eqnarray}
The first-order motion leaves the zonal sector, while second order can return to it either directly through the $m=0$ part of the second derivative or indirectly through two successive $m=\pm1$ transitions.

The sphere example gives the following clean conclusions:
\begin{eqnarray}
\Delta_\alpha^{(1)}(t)=0,
\qquad
\Delta_\alpha^{(2)}(t)=0 \;,
\label{SPH57}    
\end{eqnarray}
not because the perturbation is absent, but because $\Phi(E,q)$ is independent of $q$ by the addition theorem.  The nontrivial dynamics is in the moving eigenfunctions and projectors:
\begin{equation}
\partial_sG_0(x,N|E)\text{ lies in the }m=\pm1\text{ sector},
\label{SPH58}    
\end{equation}
while
\begin{equation}
\partial_s^2G_0(x,N|E)\text{ contains }m=0\text{ and }m=\pm2\text{ sectors}.
\label{SPH59}    
\end{equation}
Consequently, the first-order transition from a point-interaction state goes to the adapted $m=\pm1$ subspace, whereas second order contains both a return to the zonal point-interaction sector and transitions to the $m=\pm2$ sector.  The degeneracy of the free spherical spectrum is therefore not ignored; it is resolved by choosing the basis adapted to the support point and to the tangent direction of the great-circle motion.

\subsection{Two-dimensional isotropic harmonic oscillator with a moving renormalized delta center}
\label{subsec:two-dimensional-oscillator}

Harmonic oscillators perturbed by delta-type interactions have been studied
from several complementary viewpoints.  The one-dimensional harmonic oscillator
with a delta-function perturbation was solved explicitly in Ref.~\cite{Patil2006DeltaHO}, while the statistical properties of the harmonic oscillator with an added delta potential were analyzed in Ref.~\cite{JankeCheng}.  Related models have also appeared in the study of Bose--Einstein condensation, both for attractive impurity centers in one dimension \cite{IoriattiRosaHipolito} and for harmonic traps decorated with Dirac delta interactions \cite{Huncu}.  From a more structural point of view, delta-function perturbations have been treated by path-integral methods \cite{Grosche} and by spectral analysis of delta-type perturbations of the harmonic oscillator \cite{Fassari1}.  The example considered below differs from these static settings in that the delta center is allowed to move in the plane, so that the perturbation is generated by the motion of the renormalized point interaction.

We now work out the two-dimensional isotropic harmonic oscillator with a moving singular perturbation supported at $q(t)=(q_1(t),q_2(t))$.
Only after the vector formulas have been obtained do we specialize to a straight oscillation.  Because the oscillator is isotropic, choosing the direction of this oscillation is a choice of coordinates, not a reduction to a one-dimensional delta interaction.

We use oscillator units $\hbar=m=\omega=1$ and 
\begin{equation}
    H_0=-\frac12\Delta+\frac12|r|^2 \;,
    \label{eq:2dho-H0}
\end{equation}
where $r=(x,y)$. The heat kernel is the product of two one-dimensional Mehler kernels \cite{ThangaveluHermiteLaguerre} and given by 
\begin{equation}
K_s(r,r')=\frac{1}{2\pi\sinh s}
\exp\left[-\frac{(|r|^2+|r'|^2)\cosh s-2r\cdot r'}{2\sinh s}\right] \;,
\label{eq:2dho-Mehler}
\end{equation}
where $r \cdot r'$ denotes the dot product of vectors $r$ and $r'$. On the diagonal it reduces to
\begin{equation}
    K_s(q,q)=\frac{1}{2\pi\sinh s}
    \exp\left[-|q|^2\tanh\frac{s}{2}\right] \;.
    \label{eq:2dho-diagonal-kernel}
\end{equation}
The dependence is only through the invariant radius $|q|^2=q_1^2+q_2^2$.  This observation already anticipates two facts: the first derivative at the origin vanishes, and the Hessian at the origin is proportional to $\delta_{ij}$.

The Cartesian oscillator eigenfunctions are given by \cite{KonishiPaffutiQM}
\begin{equation}
    \Phi_{n_1,n_2}(x,y)=\chi_{n_1}(x)\chi_{n_2}(y),
    \qquad
    E_{n_1,n_2}=n_1+n_2+1 \;,
    \label{eq:2dho-cartesian}
\end{equation}
where
\begin{equation}
    \chi_n(x)=\frac{1}{\pi^{1/4}\sqrt{2^n n!}}H_n(x)\e^{-x^2/2} \;.
\end{equation}
At the origin the one-dimensional factors satisfy
$\chi_{2p+1}(0)=0$, and $\chi'_{2p}(0)=0$. We write
\begin{equation}
    A_p:=\chi_{2p}(0)=(-1)^p\frac{\sqrt{(2p)!}}{\pi^{1/4}2^p p!},
    \qquad
    B_p:=\chi'_{2p+1}(0)=\sqrt{2(2p+1)}A_p \;.
    \label{eq:Ap-Bp}
\end{equation}
Thus
\begin{equation}
    A_p^2=\frac{(2p)!}{\sqrt\pi\,2^{2p}(p!)^2},
    \qquad
    B_p^2=2(2p+1)A_p^2 \;.
\end{equation}
The notation $p,\ell$ will be used for oscillator indices, so that it is not confused with the moving support coordinate $q$.

For a support point $q$, the renormalized principal function is 
\begin{equation}
    \Phi(E,q)=\int_0^\infty K_s(q,q)\bigl(\e^{-s\mu_0}-\e^{sE}\bigr)\dd s \;.
\end{equation}
At $q=0$ the shifted point-interaction energies $E_\alpha^*$ are the zeros $\Phi(E_\alpha^*,0)=0$, where $\alpha=(n_1,n_2)$. The normalized point-interaction eigenfunction is
\begin{equation}
    \psi_\alpha(r)=\frac{G_0(r,0|E_\alpha^*)}{\sqrt{N_\alpha}},
    \qquad
    N_\alpha=-\Phi_E(E_\alpha^*,0)>0 \;.
    \label{eq:2dho-point-state}
\end{equation}
Using the Cartesian basis, the Green function at the origin is
\begin{equation}
G_0(r,0|E)=
\sum_{p,\ell=0}^{\infty}
\frac{A_pA_\ell\,\chi_{2p}(x)\chi_{2\ell}(y)}{2p+2\ell+1-E} \;.
\label{eq:2dho-G-origin}
\end{equation}
Only even-even modes appear because $\Phi_{n_1,n_2}(0,0)=\chi_{n_1}(0)\chi_{n_2}(0)$ vanishes unless both indices are even.  The finite normalization derivative is therefore
\begin{equation}
    N_\alpha=
    \sum_{p,\ell=0}^{\infty}
    \frac{A_p^2A_\ell^2}{(2p+2\ell+1-E_\alpha^*)^2} \;.
    \label{eq:2dho-Nalpha}
\end{equation}
The undifferentiated diagonal Green function is logarithmically divergent in two dimensions; the derivative in equation \eqref{eq:2dho-Nalpha} is finite and is precisely the normalization of the renormalized state.

Let us now assume that the center of the delta potential executes a small oscillation
\begin{equation}
    q(t)=\varepsilon u\cos(\Omega t),
    \qquad
    u=(u_1,u_2),
    \qquad |u|=1,
    \qquad |\varepsilon|\ll1 \;.
    \label{eq:2dho-q-motion}
\end{equation}
The pole equation is $\Phi(E_\alpha(t),q(t))=0$, with $E_\alpha(t)=E_\alpha^*+\Delta_\alpha^{(1)}(t)+\Delta_\alpha^{(2)}(t)+\cdots$. The first derivative of the diagonal kernel is
\begin{equation}
    \partial_{q_i}K_s(q,q)
    =-2q_i\tanh\frac{s}{2}\,K_s(q,q) \;.
\end{equation}
Hence
\begin{equation}
    \partial_{q_i}K_s(q,q)\big|_{q=0}=0,
    \qquad
    \Phi_i(E,0)=0 \;.
    \label{eq:2dho-Phi-i-zero}
\end{equation}
This cancellation is also visible in the oscillator basis: for each product state, the derivative of $|\Phi_{n_1,n_2}(q)|^2$ at $q=0$ vanishes, because either an odd factor has zero value or an even factor has zero first derivative.  Therefore \(\Delta_\alpha^{(1)}(t)=0\). The vanishing is a two-dimensional symmetry statement, not a one-dimensional parity accident.

The second coordinate derivative follows from another differentiation of equation \eqref{eq:2dho-diagonal-kernel}:
\begin{equation}
\partial_{q_i}\partial_{q_j}K_s(q,q)=
\left[-2\delta_{ij}\tanh\frac{s}{2}
+4q_iq_j\tanh^2\frac{s}{2}\right]K_s(q,q) \;.
\end{equation}
At $q=0$,
\begin{equation}
\partial_{q_i}\partial_{q_j}K_s(q,q)\big|_{q=0}
=-2\delta_{ij}\tanh\frac{s}{2}K_s(0,0) \;.
\end{equation}
Consequently
\begin{equation}
    \Phi_{ij}(E,0)=C(E)\delta_{ij} \;,
    \label{eq:2dho-Hessian-isotropic}
\end{equation}
where
\begin{equation}
    C(E)=-2\int_0^\infty K_s(0,0)\tanh \left(\frac{s}{2} \right) 
    \bigl(\e^{-s\mu_0}-\e^{sE}\bigr)\dd s \;.
    \label{eq:2dho-C}
\end{equation}
The integral is finite.  Near $s=0$, one has $K_s(0,0)\sim(2\pi s)^{-1}$, $\tanh(s/2)\sim s/2$, and $\e^{-s\mu_0}-\e^{sE}=-(\mu_0+E)s+O(s^2)$.  Thus the integrand is regular at the lower limit.

The same Hessian can be checked term by term.  For the $x$-direction,
\begin{align}
\Phi_{11}(E,0)=&
\sum_{p,\ell=0}^{\infty}
2A_\ell^2A_p\chi_{2p}''(0)
\left(\frac1{2p+2\ell+1+\mu_0}-\frac1{2p+2\ell+1-E}\right)
\nonumber\\
&+\sum_{p,\ell=0}^{\infty}
2B_p^2A_\ell^2
\left(\frac1{2p+2\ell+2+\mu_0}-\frac1{2p+2\ell+2-E}\right) \;.
\label{eq:2dho-Hessian-spectral}
\end{align}
The first line comes from differentiating even-even modes twice; the second line comes from the modes odd in $x$ and even in $y$.  Rotational symmetry gives
\begin{equation}
    \Phi_{11}(E,0)=\Phi_{22}(E,0)=C(E),
    \qquad
    \Phi_{12}(E,0)=0 \;.
\end{equation}
Since $\Delta_\alpha^{(1)}=0$, the second-order pole shift reduces to
\begin{equation}
    \Phi_E(E_\alpha^*,0)\Delta_\alpha^{(2)}(t)
    +\frac12\Phi_{ij}(E_\alpha^*,0)q_i(t)q_j(t)=0 \;.
\end{equation}
Using $N_\alpha=-\Phi_E(E_\alpha^*,0)$ gives
\begin{equation}
    \Delta_\alpha^{(2)}(t)=
    \frac{1}{2N_\alpha}\Phi_{ij}(E_\alpha^*,0)q_i(t)q_j(t) \;.
    \label{eq:2dho-second-shift-general}
\end{equation}
For the isotropic oscillator and the motion \eqref{eq:2dho-q-motion}, this becomes
\begin{equation}
    \Delta_\alpha^{(2)}(t)=
    \frac{C(E_\alpha^*)}{2N_\alpha}\varepsilon^2\cos^2(\Omega t) \;.
    \label{eq:2dho-second-shift}
\end{equation}
The answer is independent of the direction $u$, as required by rotational symmetry.

Although the first-order energy shift vanishes, first-order transitions do not.  The derivative of the point-interaction eigenfunction in the direction of the motion is
\begin{equation}
    (u\cdot\nabla_q)\psi_\alpha(r;q)\big|_{q=0}
    =\frac{1}{\sqrt{N_\alpha}}
    (u\cdot\nabla_q)G_0(r,q|E_\alpha^*)\big|_{q=0} \;.
\end{equation}
The normalization derivative does not appear at first order because it is proportional to $\Phi_i(E_\alpha^*,0)$, which vanishes.  From the spectral representation,
\begin{align}
(u\cdot\nabla_q)G_0(r,q|E_\alpha^*)\big|_{q=0}
={}&\sum_{n_1,n_2}\frac{\Phi_{n_1,n_2}(r)u_i\partial_i\Phi_{n_1,n_2}(0)}{E_{n_1,n_2}-E_\alpha^*} \;.
\end{align}
The only nonzero derivatives at the origin are
\begin{equation}
    \partial_x\Phi_{2p+1,2\ell}(0,0)=B_pA_\ell \;,
    \qquad
    \partial_y\Phi_{2p,2\ell+1}(0,0)=A_pB_\ell \;.
\end{equation}
Therefore
\begin{align}
(u\cdot\nabla_q)\psi_\alpha(r;q)\big|_{q=0}
={}&\frac{1}{\sqrt{N_\alpha}}
\sum_{p,\ell=0}^{\infty}
\frac{u_1B_pA_\ell\,\Phi_{2p+1,2\ell}(r)}{2p+2\ell+2-E_\alpha^*}
\nonumber\\
&+\frac{1}{\sqrt{N_\alpha}}
\sum_{p,\ell=0}^{\infty}
\frac{u_2A_pB_\ell\,\Phi_{2p,2\ell+1}(r)}{2p+2\ell+2-E_\alpha^*} \;.
\label{eq:2dho-selection-vector}
\end{align}
This is the vector selection rule.  Motion along $u$ couples the point-interaction sector to the $u_1$-weighted $x$-odd family and the $u_2$-weighted $y$-odd family.  The explicit coefficients are
\begin{equation}
    B_pA_\ell=(-1)^{p+\ell}
    \frac{\sqrt{2(2p+1)(2p)!(2\ell)!}}{\sqrt\pi\,2^{p+\ell}p!\ell!} \;,
    \label{eq:2dho-BpAl}
\end{equation}
\begin{equation}
    A_pB_\ell=(-1)^{p+\ell}
    \frac{\sqrt{2(2\ell+1)(2p)!(2\ell)!}}{\sqrt\pi\,2^{p+\ell}p!\ell!} \;.
    \label{eq:2dho-ApBl}
\end{equation}
Let $\rho_x=(2p+1,2\ell)$ and $\rho_y=(2p,2\ell+1)$ denote allowed odd oscillator states.  Differentiating the instantaneous eigenvalue equation gives, for any unaffected oscillator state $\Phi_\rho$,
\begin{equation}
    \langle\Phi_\rho|H^{(1)}(t)|\psi_\alpha\rangle
    =(E_\alpha^*-E_\rho)\,
    \langle\Phi_\rho|q_i(t)\partial_{q_i}\psi_\alpha\rangle \;.
\end{equation}
It is convenient to factor out the time dependence and define
\begin{equation}
    \mathcal{W}_{(2p+1,2\ell);\alpha}^{(1)}(u)=-\frac{u_1B_pA_\ell}{\sqrt{N_\alpha}} \;,
    \qquad
    \mathcal{W}_{(2p,2\ell+1);\alpha}^{(1)}(u)=-\frac{u_2A_pB_\ell}{\sqrt{N_\alpha}} \;.
    \label{eq:2dho-W1}
\end{equation}
All other first-order matrix elements from $\psi_\alpha$ to oscillator modes vanish.  If $u=(1,0)$, only $x$-odd states remain; if $u=(0,1)$, only $y$-odd states remain.

For an allowed odd state $\rho$, set
$\omega_{\rho\alpha}=\frac{E_\rho-E_\alpha^*}{\hbar}$. With the oscillation \eqref{eq:2dho-q-motion}, the first-order transition amplitude is
\begin{equation}
    c_\rho^{(1)}(t)
    =-\frac{\ii}{\hbar} \mathcal{W}_{\rho\alpha}^{(1)}(u)
    \int_0^t\varepsilon\cos(\Omega t')\e^{\ii\omega_{\rho\alpha}t'}\dd t' \;.
\end{equation}
Thus
\begin{align}
    c_\rho^{(1)}(t)
    =-\frac{\ii\varepsilon}{2\hbar} \mathcal{W}_{\rho\alpha}^{(1)}(u)
    \left[
    \frac{\e^{\ii(\omega_{\rho\alpha}+\Omega)t}-1}{\ii(\omega_{\rho\alpha}+\Omega)}
    +
    \frac{\e^{\ii(\omega_{\rho\alpha}-\Omega)t}-1}{\ii(\omega_{\rho\alpha}-\Omega)}
    \right] \;.
    \label{eq:2dho-first-amplitude}
\end{align}

Near resonance, $\Omega\simeq\omega_{\rho\alpha}$, the second denominator must be interpreted by the limit
\begin{equation}
    \lim_{\omega\to\Omega}\frac{\e^{\ii(\omega-\Omega)t}-1}{\ii(\omega-\Omega)}=t \;.
\end{equation}
Thus the probability grows quadratically at short resonant times,
\begin{equation}
    |c_\rho^{(1)}(t)|^2\simeq
    \frac{\varepsilon^2}{4\hbar^2}|\mathcal{W}_{\rho\alpha}^{(1)}(u)|^2t^2 \;,
\end{equation}
up to oscillatory non-resonant terms.  This is the usual first-order resonance behaviour, now with a coupling fixed by the derivative of the renormalized point-interaction eigenfunction.

For $u=(1,0)$ this reads
\begin{align}
    c_{(2p+1,2\ell)}^{(1)}(t)
    =\frac{\varepsilon B_pA_\ell}{2\hbar\sqrt{N_\alpha}}
    \left[
    \frac{\e^{\ii(\omega_{\rho\alpha}+\Omega)t}-1}{\omega_{\rho\alpha}+\Omega}
    +
    \frac{\e^{\ii(\omega_{\rho\alpha}-\Omega)t}-1}{\omega_{\rho\alpha}-\Omega}
    \right] \;,
\end{align}
while the $y$-odd amplitudes vanish only because the imposed motion has no $y$ component.

The second-order structure has two distinct parts.  Let $\psi_\beta$ be another point-interaction state in the even sector.  Then
\begin{equation}
    c_\beta^{(2)}(t)=c_{\beta,\mathrm{direct}}^{(2)}(t)+c_{\beta,\mathrm{iter}}^{(2)}(t) \;.
\end{equation}
The direct term comes from the second coordinate derivative of the Hamiltonian along $u$:
\begin{equation}
    \mathcal W_{\beta\alpha}^{(2)}(u)
    :=\frac12u_iu_j
    \langle\psi_\beta|\partial_{q_i}\partial_{q_j}H(q)|_{q=0}|\psi_\alpha\rangle \;.
\end{equation}
Thus
\begin{equation}
    c_{\beta,\mathrm{direct}}^{(2)}(t)
    =-\frac{\ii}{\hbar}\mathcal W_{\beta\alpha}^{(2)}(u)
    \int_0^t\varepsilon^2\cos^2(\Omega t')
    \e^{\ii\omega_{\beta\alpha}t'}\dd t'.
    \label{eq:2dho-direct-second}
\end{equation}
For the diagonal case, consistency with equation \eqref{eq:2dho-second-shift} gives
\begin{equation}
    \mathcal W_{\alpha\alpha}^{(2)}(u)=
    \frac{1}{2N_\alpha}\Phi_{ij}(E_\alpha^*,0)u_iu_j
    =\frac{C(E_\alpha^*)}{2N_\alpha} \;.
\end{equation}
The elementary integral in equation \eqref{eq:2dho-direct-second} is
\begin{eqnarray}
\begin{aligned}
\int_0^t\cos^2(\Omega t')\e^{\ii\omega_{\beta\alpha}t'}\dd t'
=  \frac12\frac{\e^{\ii\omega_{\beta\alpha}t}-1}{\ii\omega_{\beta\alpha}}
+  \frac14\frac{\e^{\ii(\omega_{\beta\alpha}+2\Omega)t}-1}{\ii(\omega_{\beta\alpha}+2\Omega)} 
+\frac14\frac{\e^{\ii(\omega_{\beta\alpha}-2\Omega)t}-1}{\ii(\omega_{\beta\alpha}-2\Omega)} \;.
\end{aligned}    \label{intcos2}
\end{eqnarray}
The iterated term is a sum over the allowed odd oscillator modes.  Let
\begin{equation}
    \mathcal I_u=\{(2p+1,2\ell):p,\ell\ge0\}\cup\{(2p,2\ell+1):p,\ell\ge0\} \;,
\end{equation}
with the direction-dependent weights already contained in $W^{(1)}(u)$.  Then
\begin{align}
 c_{\beta,\mathrm{iter}}^{(2)}(t)=
-&\frac{1}{\hbar^2}
\sum_{\rho\in\mathcal I_u} \mathcal{W}_{\beta\rho}^{(1)}(u) \mathcal{W}_{\rho\alpha}^{(1)}(u)
\int_0^t\dd t_1\,\varepsilon\cos(\Omega t_1)
\e^{\frac{\ii}{\hbar}(E_\beta^*-E_\rho)t_1}
\nonumber\\
&\times
\int_0^{t_1}\dd t_2\,\varepsilon\cos(\Omega t_2)
\e^{\frac{\ii}{\hbar}(E_\rho-E_\alpha^*)t_2} \;.
\label{eq:2dho-iterated-second}
\end{align}
This formula displays the full two-dimensional character of the motion: for a generic direction $u$, both odd families contribute.

The calculations in this subsection verify the general moving-center formulas explicitly.  The pole does not move at first order, but the state does.  The leading energy shift is quadratic in the displacement and isotropic.  The first-order transition selection rule is vectorial: the direction of the motion selects the corresponding odd oscillator sector.

\subsection{Circular motion of a renormalized point interaction in the two-dimensional oscillator}

This example is a refinement of the previous two-dimensional oscillator example.  There we moved the support in a fixed direction,
$q(t)=\varepsilon u\cos\Omega t$. Now the point interaction moves on a small circle around the origin,
\begin{equation}
q(t)=\varepsilon(\cos\Omega t,\sin\Omega t),\qquad |\varepsilon|\ll1 \;.
\label{CIRC1}    
\end{equation}
The interaction is still a genuine two-dimensional renormalized point interaction.  The support is the point
\begin{equation}
q(t)=(q_1(t),q_2(t)),
\qquad
\delta^{(2)}(r-q(t))=\delta(x-q_1(t))\delta(y-q_2(t)) \;.
\label{CIRC2}    
\end{equation}
This example is useful because the direction of motion rotates, so the first-order transition is naturally written in circularly polarized combinations of the two Cartesian derivatives.
We keep the same dimensionless two-dimensional oscillator used above equation \eqref{eq:2dho-H0}.

The first-order Taylor expansion of the moving-center zero condition is 
\begin{equation}
0=\Phi_E \Delta_\alpha^{(1)}(t)+\Phi_iq_i(t)\;,
\label{CIRC13}    
\end{equation}
where all derivatives are evaluated at \((E_\alpha^*,0)\).  From the diagonal heat kernel,
\begin{equation}
    \partial_iK_s(q,q)=-2q_i\tanh\frac{s}{2}K_s(q,q) \;,
\label{CIRC14}
\end{equation}
so at the origin
$\partial_iK_s(q,q)|_{q=0}=0$. Consequently
\begin{equation}
\Phi_i(E,0)=\int_0^\infty \partial_iK_s(q,q)|_{q=0}
\left(\e^{-s\mu}-\e^{sE}\right)\dd s=0 \;.
\label{CIRC16}    
\end{equation}
Substituting this into equation \eqref{CIRC13} gives
$\Delta_\alpha^{(1)}(t)=0$. Thus the circular motion does not create a first-order pole shift.  This is the same cancellation as in the straight oscillation case, but now it happens for a rotating displacement vector.

At second order, because \(\Delta_\alpha^{(1)}=0\), the zero condition reduces to
\begin{equation}
0=\Phi_E \Delta_\alpha^{(2)}(t)+\frac12\Phi_{ij}q_i(t)q_j(t) \;.
\label{CIRC18}    
\end{equation}
The second coordinate derivative of the diagonal heat kernel is
\begin{equation}
\partial_i\partial_jK_s(q,q)=
\left[-2\delta_{ij}\tanh\frac{s}{2}+4q_iq_j\tanh^2\frac{s}{2}\right]K_s(q,q) \;.
\label{CIRC19}    
\end{equation}
At the origin, $\partial_i\partial_jK_s(q,q)|_{q=0}=-2\delta_{ij}\tanh\frac{s}{2}K_s(0,0)$. Therefore the Hessian has the isotropic form already given in equation \eqref{eq:2dho-Hessian-isotropic}, with \(C(E)\) defined in equation \eqref{eq:2dho-C}.
For the circular trajectory \eqref{CIRC1},
\begin{equation}
 q_i(t)q_j(t)\delta_{ij}=q_1(t)^2+q_2(t)^2
=\varepsilon^2\cos^2\Omega t+\varepsilon^2\sin^2\Omega t=\varepsilon^2.
\label{CIRC23}   
\end{equation}
Thus $\Phi_{ij}q_iq_j=C(E_\alpha^*)\varepsilon^2$. Using \(N_\alpha=-\Phi_E\), equation \eqref{CIRC18} gives
\begin{equation}
\Delta_\alpha^{(2)}=
\frac{C(E_\alpha^*)}{2N_\alpha}\varepsilon^2 \;.
\label{CIRC25}    
\end{equation}
This is the first important difference from the straight-line oscillation.  For
\(q(t)=\varepsilon u\cos\Omega t\), the quadratic shift was proportional to \(\varepsilon^2\cos^2\Omega t\).  For circular motion, \(|q(t)|^2\) is constant, so the second-order pole shift is constant.

The first-order change of the instantaneous eigenfunction is governed by
\begin{equation}
q_i(t)\partial_{q_i}G_0(r,q|E_\alpha^*)|_{q=0} \;.
\label{CIRC26}    
\end{equation}
For the circular trajectory,
\begin{equation}
q_i(t)\partial_{q_i}
=\varepsilon\left(\cos\Omega t\,\partial_x+\sin\Omega t\,\partial_y\right) \;.
\label{CIRC27}    
\end{equation}
Writing the trigonometric functions as exponentials gives
\begin{equation}
\cos\Omega t\,\partial_x+\sin\Omega t\,\partial_y
=\frac12\e^{\ii\Omega t}(\partial_x-\ii\partial_y)
+\frac12\e^{-\ii\Omega t}(\partial_x+\ii\partial_y) \;.
\label{CIRC28}    
\end{equation}
It is therefore natural to define
\begin{equation}
D_-:=\partial_x-\ii\partial_y,
\qquad
D_+:=\partial_x+\ii\partial_y \;.
\label{CIRC29}    
\end{equation}
Then
\begin{equation}
q_i(t)\partial_{q_i}
=\frac{\varepsilon}{2}\left(\e^{\ii\Omega t}D_-+\e^{-\ii\Omega t}D_+\right) \;.
\label{CIRC30}    
\end{equation}
This is the clean circular-motion version of the straight-direction derivative \(u\cdot\nabla\).  It shows explicitly that the driving has two circular frequency components, one rotating with \(+\Omega\) and one with \(-\Omega\).

From the spectral representation,
\begin{equation}
\partial_x G_0(r,q|E_\alpha^*)|_{q=0}
=\sum_{p,\ell=0}^\infty
\frac{B_pA_\ell\,\Phi_{2p+1,2\ell}(r)}{2p+2\ell+2-E_\alpha^*} \;,
\label{CIRC31}    
\end{equation}
while
\begin{equation}
\partial_yG_0(r,q|E_\alpha^*)|_{q=0}
=\sum_{p,\ell=0}^\infty
\frac{A_pB_\ell\,\Phi_{2p,2\ell+1}(r)}{2p+2\ell+2-E_\alpha^*} \;.
\label{CIRC32}    
\end{equation}
The coefficients \(B_pA_\ell\) and \(A_pB_\ell\) are the same as those displayed in equations \eqref{eq:2dho-BpAl} and \eqref{eq:2dho-ApBl}.
Thus the circular derivatives are
\begin{equation}
D_-G_0|_0=\sum_{p,\ell\ge0}
\frac{B_pA_\ell\Phi_{2p+1,2\ell}-\ii A_pB_\ell\Phi_{2p,2\ell+1}}
{2p+2\ell+2-E_\alpha^*} \;,
\label{CIRC34a}    
\end{equation}
\begin{equation}
D_+G_0|_0=\sum_{p,\ell\ge0}
\frac{B_pA_\ell\Phi_{2p+1,2\ell}+\ii A_pB_\ell\Phi_{2p,2\ell+1}}
{2p+2\ell+2-E_\alpha^*} \;.
\label{CIRC34b}    
\end{equation}
The two combinations are the Cartesian form of the angular-momentum selection rule.  Circular motion does not select purely \(x\)-odd or purely \(y\)-odd states; it selects the circularly polarized combinations of the two degenerate odd sectors.

We now introduce the static first-order matrix elements corresponding to the two circular components, defined as
\begin{equation}
\mathcal{W}_{\rho\alpha}^{[-]}
:=\mathcal{W}_{\rho\alpha}^{x}-\ii \mathcal{W}_{\rho\alpha}^{y},
\qquad
\mathcal{W}_{\rho\alpha}^{[+]}
:=\mathcal{W}_{\rho\alpha}^{x}+\ii \mathcal{W}_{\rho\alpha}^{y},
\label{CIRC35}    
\end{equation}
where
\begin{equation}
\mathcal{W}_{(2p+1,2\ell),\alpha}^{x}=-\frac{B_pA_\ell}{\sqrt{N_\alpha}} \;,
\qquad
\mathcal{W}_{(2p,2\ell+1),\alpha}^{y}=-\frac{A_pB_\ell}{\sqrt{N_\alpha}} \;,
\label{CIRC36}    
\end{equation}
and all incompatible Cartesian components vanish.  With this convention the first-order perturbation matrix element is
\begin{equation}
    \langle \psi_\rho|H^{(1)}(t)|\psi_\alpha\rangle
=\frac{\varepsilon}{2}\left(
\e^{\ii\Omega t} \mathcal{W}_{\rho\alpha}^{[-]}+
\e^{-\ii\Omega t} \mathcal{W}_{\rho\alpha}^{[+]}
\right) \;.
\label{CIRC37}
\end{equation}
It follows from \eqref{CIRC37} that
\begin{eqnarray} 
\begin{aligned}
c_\rho^{(1)}(t) & = 
-\frac{\ii}{\hbar}\int_0^t
\langle \psi_\rho|H^{(1)}(t')|\psi_\alpha\rangle
\e^{\ii\omega_{\rho\alpha}t'}\dd t' \\  & = 
- \frac{\ii\varepsilon}{2\hbar}
\left[
\mathcal{W}_{\rho\alpha}^{[-]}
\int_0^t\e^{\ii(\omega_{\rho\alpha}+\Omega)t'}\dd t'
+
\mathcal{W}_{\rho\alpha}^{[+]}
\int_0^t\e^{\ii(\omega_{\rho\alpha}-\Omega)t'}\dd t'
\right] \;.
\end{aligned}
\label{CIRC40}
\end{eqnarray}
After evaluation of the elementary integrals,  we get
\begin{equation}
 c_\rho^{(1)}(t)=
-\frac{\varepsilon}{2\hbar}
\left[
\mathcal{W}_{\rho\alpha}^{[-]}
\frac{\e^{\ii(\omega_{\rho\alpha}+\Omega)t}-1}{\omega_{\rho\alpha}+\Omega}
+
\mathcal{W}_{\rho\alpha}^{[+]}
\frac{\e^{\ii(\omega_{\rho\alpha}-\Omega)t}-1}{\omega_{\rho\alpha}-\Omega}
\right].
\label{CIRC42}    
\end{equation}
Thus circular motion gives two resonance channels.  The component rotating as \(\e^{\ii\Omega t}\) produces \(\omega_{\rho\alpha}+\Omega\), and the component rotating as \(\e^{-\ii\Omega t}\) produces \(\omega_{\rho\alpha}-\Omega\).

The second-order Taylor term in the moving support is
\begin{equation}
H_{\rm direct}^{(2)}(t)=\frac12 q_i(t)q_j(t)H_{ij},
\qquad
H_{ij}:=\partial_{q_i}\partial_{q_j}H(q)|_{q=0}.
\label{CIRC43}    
\end{equation}
Using equation \eqref{CIRC30}, we find
\begin{equation}
q_iq_jH_{ij}=\frac{\varepsilon^2}{4}
\left[
\e^{2\ii\Omega t}H_{--}+\e^{-2\ii\Omega t}H_{++}+H_{-+}+H_{+-}\right],
\label{CIRC44}    
\end{equation}
where
\begin{equation}
H_{--}:=(\partial_x-\ii\partial_y)^2H,
\qquad
H_{++}:=(\partial_x+\ii\partial_y)^2H,
\label{CIRC45a}    
\end{equation}
\begin{equation}
H_{-+}=H_{+-}=\partial_x^2H+\partial_y^2H.
\label{CIRC45b}    
\end{equation}
Hence
\begin{equation}
H_{\rm direct}^{(2)}(t)=\varepsilon^2\left[
\mathcal U_{\beta\alpha}^{(0)}+
\mathcal U_{\beta\alpha}^{(+2)}\e^{2\ii\Omega t}+
\mathcal U_{\beta\alpha}^{(-2)}\e^{-2\ii\Omega t}
\right]_{\beta\alpha} \;,
\label{CIRC46}    
\end{equation}
with matrix elements $\mathcal U_{\beta\alpha}^{(0)}=\frac14\langle\psi_\beta|H_{xx}+H_{yy}|\psi_\alpha\rangle$,
\begin{equation}
\mathcal U_{\beta\alpha}^{(+2)}=\frac18\langle\psi_\beta|H_{--}|\psi_\alpha\rangle,
\qquad
\mathcal U_{\beta\alpha}^{(-2)}=\frac18\langle\psi_\beta|H_{++}|\psi_\alpha\rangle \;.
\label{CIRC47b}    
\end{equation}
For the diagonal case, this formula must reproduce the pole shift \eqref{CIRC25}. Indeed,
\begin{equation}
\langle\psi_\alpha|H_{xx}|\psi_\alpha\rangle
=\langle\psi_\alpha|H_{yy}|\psi_\alpha\rangle
=\frac{C(E_\alpha^*)}{N_\alpha} \;,
\label{CIRC48}    
\end{equation}
and the mixed rotational pieces have zero diagonal expectation by isotropy. Therefore
\begin{equation}
\varepsilon^2\mathcal U_{\alpha\alpha}^{(0)}
=\varepsilon^2\frac14\left(\frac{C}{N_\alpha}+\frac{C}{N_\alpha}\right)
=\frac{C(E_\alpha^*)}{2N_\alpha}\varepsilon^2 \;,
\label{CIRC49}    
\end{equation}
which is exactly equation \eqref{CIRC25}.

The corresponding direct second-order transition amplitude is
\begin{eqnarray}
\begin{aligned}
 c_{\beta,{\rm direct}}^{(2)}(t)
=-\frac{\ii\varepsilon^2}{\hbar}
\Bigg[\mathcal U_{\beta\alpha}^{(0)}
\frac{\e^{\ii\omega_{\beta\alpha}t}-1}{\ii\omega_{\beta\alpha}}
& +\mathcal U_{\beta\alpha}^{(+2)}
\frac{\e^{\ii(\omega_{\beta\alpha}+2\Omega)t}-1}{\ii(\omega_{\beta\alpha}+2\Omega)}
\\
&+\mathcal U_{\beta\alpha}^{(-2)}
\frac{\e^{\ii(\omega_{\beta\alpha}-2\Omega)t}-1}{\ii(\omega_{\beta\alpha}-2\Omega)}
\Bigg] \;.
\end{aligned}
\label{CIRC50}    
\end{eqnarray}
This is the circular-motion analogue of the \(\cos^2\Omega t\) direct term.  The difference is that the constant part is now geometrically forced by \(|q(t)|^2=\varepsilon^2\), while the anisotropic part rotates at \(\pm2\Omega\).

The iterated term is obtained by inserting the first-order circular perturbation twice.  Let \(s,r\in\{+1,-1\}\), and define
\begin{equation}
\mathcal{W}_{ab}^{[+1]}:=\mathcal{W}_{ab}^{[-]} \;,
\qquad
\mathcal{W}_{ab}^{[-1]}:=\mathcal{W}_{ab}^{[+]} \;,
\label{CIRC51}    
\end{equation}
so that the first-order matrix element can be written compactly as
\begin{equation}
H_{ab}^{(1)}(t)=\frac{\varepsilon}{2}\sum_{s=\pm1}\e^{\ii s\Omega t} \mathcal{W}_{ab}^{[s]} \;.
\label{CIRC52}    
\end{equation}
Then
\begin{eqnarray}
\begin{aligned}
 c_{\beta,{\rm iter}}^{(2)}(t)
=-\frac{\varepsilon^2}{4\hbar^2}
\sum_\rho\sum_{s,r=\pm1}
& \mathcal{W}_{\beta\rho}^{[s]} \mathcal{W}_{\rho\alpha}^{[r]}
\int_0^t\dd t_1\,
\e^{\ii(\omega_{\beta\rho}+s\Omega)t_1}
\\
&\times
\int_0^{t_1}\dd t_2\,
\e^{\ii(\omega_{\rho\alpha}+r\Omega)t_2} \;.
\end{aligned}
\label{CIRC53}    
\end{eqnarray}
The time-ordered integral is elementary.  For
\begin{equation}
A=\omega_{\beta\rho}+s\Omega,
\qquad
B=\omega_{\rho\alpha}+r\Omega \;,
\label{CIRC54}    
\end{equation}
we have
\begin{equation}
    \mathcal I(A,B;t):=\int_0^t\dd t_1\,\e^{\ii At_1}
\int_0^{t_1}\dd t_2\,\e^{\ii Bt_2}
=\frac1{\ii B}\left[
\frac{\e^{\ii(A+B)t}-1}{\ii(A+B)}-
\frac{\e^{\ii At}-1}{\ii A}
\right] \;,
\label{CIRC55}
\end{equation}
with the usual limiting interpretation if a denominator vanishes. Thus
\begin{equation}
 c_{\beta,{\rm iter}}^{(2)}(t)
=-\frac{\varepsilon^2}{4\hbar^2}
\sum_\rho\sum_{s,r=\pm1}
\mathcal{W}_{\beta\rho}^{[s]} \mathcal{W}_{\rho\alpha}^{[r]}
\mathcal I(\omega_{\beta\rho}+s\Omega,\omega_{\rho\alpha}+r\Omega;t) \;.
\label{CIRC56}    
\end{equation}
The intermediate states \(\rho\) are precisely the odd oscillator states reached by the first circular derivative.  In a basis adapted to angular momentum, this is the statement that the first insertion changes angular momentum by one unit, and two first-order insertions can return to the radial sector or reach the corresponding second-order sector.

The straight-line motion and circular motion share the same local renormalized Hessian \(\Phi_{ij}=C\delta_{ij}\), so both have $\Delta^{(1)}=0$.
They differ at second order because the radius of circular motion is constant:
$|q(t)|^2=\varepsilon^2$ instead of  $|q(t)|^2=\varepsilon^2\cos^2\Omega t$. Therefore
\begin{equation}
\Delta_\alpha^{(2)}=\frac{C(E_\alpha^*)}{2N_\alpha}\varepsilon^2
\quad\text{is constant for circular motion}.
\label{CIRC57}    
\end{equation}
The first-order transition is not a single Cartesian selection rule but the circular combination
$\partial_x\mp\ii\partial_y$, so the first-order amplitude has the two resonance denominators
$\omega_{\rho\alpha}+\Omega$, $\omega_{\rho\alpha}-\Omega$.
At second order the direct term contains a constant part and rotating \(\pm2\Omega\) parts, while the iterated term is the ordered sum over two circular first-order insertions.  This is the cleanest oscillator example for displaying the angular-momentum content of the moving renormalized point interaction.

\subsection{One-dimensional oscillator without renormalization}
\label{sec:1dho-detailed}

The main purpose of the manuscript is to develop a renormalized perturbation theory  for time dependent point interactions when we have a discrete spectrum.  A one-dimensional delta interaction does not need renormalization.  Nevertheless, it is an excellent check of the moving-center formalism, because the Green function, the eigenfunctions and the parity selection rules are all explicit.
We therefore use this example as a one-dimensional analogue of the moving-delta calculation.  The role of the renormalized principal function is played by the ordinary one-dimensional point-interaction denominator. Moreover, it gives us an opportunity to compare our approach with the naive time dependent perturbation. 

Let
\begin{equation}
H_{0}= -\frac{1}{2}\frac{d^2}{dx^2}+\frac12 x^2 \;,    
\end{equation}
where we have used units such that $\hbar=m=\omega=1$. Its normalized eigenfunctions and eigenvalues \cite{KonishiPaffutiQM} are
\begin{equation}
\varphi_n(x)=\frac{1}{\pi^{1/4}\sqrt{2^n n!}}
H_n\!\left(x \right)\exp\!\left(-\frac{x^2}{2}\right),
\qquad
E_n= \left(n+\frac12\right) \;.   
\end{equation}
We take an attractive delta center with fixed strength \(\lambda>0\), moving slightly around the origin:
\begin{equation}
H(a)=H_{0}-\lambda\delta(x-a),
\qquad
a(t)=\varepsilon f(t),
\qquad |\varepsilon|\ll1  \;.   
\end{equation}
For a simple oscillating center, one may put
\begin{equation}
f(t)=\cos(\Omega t),
\qquad a(t)=\varepsilon\cos(\Omega t) \;.
\end{equation}
If the opposite sign convention for the delta coupling is used, one should replace \(\lambda\) by \(-\lambda\) in the formulas below.  The perturbative structure is unchanged.

The harmonic-oscillator Green function is
\begin{equation}
    G_{0}(x,y|E)=\sum_{n=0}^{\infty}
\frac{\varphi_n(x)\varphi_n(y)}{E_n-E} \;.
\label{HO1}
\end{equation}
In one dimension the point-interaction eigenvalues in the sector seen by the delta center are the zeros of
\begin{equation}
\Phi(E,a)=\frac1\lambda-G_{0}(a,a|E) \;.
\label{HO2}    
\end{equation}
At the origin,
\begin{equation}
\Phi(E_\alpha^*,0)=0
\quad\Longleftrightarrow\quad
G_{0}(0,0|E_\alpha^*)=\frac1\lambda \;.
\label{HO3}    
\end{equation}
The corresponding normalized eigenfunction is
\begin{equation}
\psi_\alpha(x)=\frac{G_{0}(x,0|E_\alpha^*)}{\sqrt{N_\alpha}} \;,
\qquad
N_\alpha=\partial_EG_{0}(0,0|E_\alpha^*) \;.
\label{HO4}    
\end{equation}
This is exactly the one-dimensional analogue of the general formula
derived above.

Because the delta is at the origin, the odd oscillator states remain unaffected at \(a=0\), to emphasize this aspect, we introduce the following notation:
\begin{equation}
\chi_r(x):=\varphi_{2r+1}(x) \;,
\qquad
\mathcal E_r:=E_{2r+1} \;.
\label{HO5}    
\end{equation}
They satisfy \(\chi_r(0)=0\). Thus, the full basis at the central position consists of the even point-interaction states \(\psi_\alpha\) and the odd oscillator states \(\chi_r\).  This split is important: the first-order motion of the center couples even and odd parity sectors.

The diagonal Green function along the center is
\begin{equation}
G_{0}(a,a|E)
=\sum_{n=0}^{\infty}\frac{\varphi_n(a)^2}{E_n-E} \;.
\label{HO6}    
\end{equation}
Its first derivative is
\begin{equation}
\partial_a G_0(a,a|E)
=2\sum_{n=0}^{\infty}
\frac{\varphi_n(a)\varphi_n'(a)}{E_n-E} \;.
\label{HO7}    
\end{equation}
At \(a=0\), parity gives
$\varphi_{2j}'(0)=0$, $\varphi_{2j+1}(0)=0$. Therefore each term in \eqref{HO7} vanishes at the origin and
\begin{equation}
\partial_a G_0(0,0|E)=0,
\qquad
\Phi_a(E,0)=-\partial_a G_0(0,0|E)=0.
\label{HO8}    
\end{equation}
The second derivative is
\begin{equation}
\partial_{a}^{2} G_0(0,0|E)
=2\sum_{n=0}^{\infty}
\frac{\varphi_n'(0)^2+\varphi_n(0)\varphi_n''(0)}{E_n-E} \;.
\label{HO9}    
\end{equation}
If desired, the oscillator equation gives at the origin
\begin{equation}
\varphi_n''(0)=-\frac{2mE_n}{\hbar^2}\varphi_n(0) \;,
\label{HO10}    
\end{equation}
so \eqref{HO9} may be rewritten entirely in terms of the values and slopes of the known oscillator eigenfunctions at zero.

The moving zero condition in this case is $0=\Phi(E_\alpha^*(t),a(t))$, with $E_\alpha^*(t)=E_\alpha^*+\Delta_\alpha^{(1)}(t)+\Delta_\alpha^{(2)}(t)$.
Expanding the principal function around \((E_\alpha^*,0)\) gives
\begin{equation}
0=\Phi_E\Delta_\alpha^{(1)}+
\Phi_a a(t) \;,
\label{HO12}    
\end{equation}
at first order.  Since \(\Phi_a=0\),
$\Delta_\alpha^{(1)}(t)=0$. At second order,
\begin{equation}
0=
\Phi_E\Delta_\alpha^{(2)}(t)
+\frac12\Phi_{aa}a(t)^2
+\Phi_{Ea}\Delta_\alpha^{(1)}(t)a(t)
+\frac12\Phi_{EE}\bigl(\Delta_\alpha^{(1)}(t)\bigr)^2.
\label{HO14}    
\end{equation}
Using \(\Delta_\alpha^{(1)}=0\), this reduces to
\begin{equation}
\Phi_E\Delta_\alpha^{(2)}(t)+\frac12\Phi_{aa}a(t)^2=0 \;.    
\end{equation}
Now
\begin{equation}
\Phi_E(E,0)=-\partial_E G_{0}(0,0|E),
\qquad
\Phi_{aa}(E,0)=- \partial_{a}^2 G_0(0,0|E) \;.    
\end{equation}
Therefore
\begin{equation}
\Delta_\alpha^{(2)}(t)
=-\frac{\partial_{a}^2 G_0(0,0|E_\alpha^*)}{2\,\partial_E G_{0}(0,0|E_\alpha^*)}\,a(t)^2 \;.
\label{HO15}    
\end{equation}
For the harmonic motion \(a(t)=\varepsilon\cos\Omega t\),
\begin{equation}
\Delta_\alpha^{(2)}(t)
=-\frac{\partial_{a}^2 G_{0}(0,0|E_\alpha^*)}{2N_\alpha}\,
\varepsilon^2\cos^2(\Omega t) \;.
\label{HO16}    
\end{equation}
This explicitly shows the parity result: the first non-zero energy shift is quadratic in the displacement.

The normalized point-sector eigenfunction at a shifted center is
\begin{equation}
\psi_\alpha(x;a)=\frac{G_{0}(x,a|E_\alpha^*(a))}{\sqrt{N_\alpha(a)}} \;.
\label{HO17}    
\end{equation}
At first order, \(\Delta_\alpha^{(1)}=0\) and \( \partial_a N_\alpha(0)=0\).  Hence
\begin{equation}
\left.\frac{\partial\psi_\alpha(x;a)}{\partial a}\right|_{a=0}
=\frac{1}{\sqrt{N_\alpha}}
\left.\frac{\partial G_{0}(x,a|E_\alpha^*)}{\partial a}\right|_{a=0} \;.
\label{HO18}    
\end{equation}
Using the spectral representation, we have
\begin{equation}
\left.\partial_aG_{0}(x,a|E_\alpha^*)\right|_{a=0}
=\sum_{n=0}^{\infty}
\frac{\varphi_n(x)\varphi_n'(0)}{E_n-E_\alpha^*} \;.
\label{HO19}    
\end{equation}
Only odd oscillator states have \(\varphi_n'(0)\neq0\), so \(\partial_a\psi_\alpha\) is odd.  This proves the selection rule:
first-order motion of the central delta couples even point states only to odd oscillator states. Indeed,
\begin{equation}
\langle\chi_r|\partial_a\psi_\alpha\rangle
=\frac{1}{\sqrt{N_\alpha}}
\frac{\chi_r'(0)}{\mathcal E_r-E_\alpha^*} \;.
\label{HO21}    
\end{equation}
Differentiating the stationary eigenvalue equation
\begin{equation}
H(a)|\psi_\alpha(a)\rangle=E_\alpha(a)|\psi_\alpha(a)\rangle \;,    
\end{equation}
with respect to \(a\), then setting \(a=0\) and projecting onto the odd state \(\chi_r\), we get 
\begin{equation}
\langle\chi_r|H_a|\psi_\alpha\rangle
=(E_\alpha^*-\mathcal E_r)\langle\chi_r|\partial_a\psi_\alpha\rangle \;,
\label{HO22}    
\end{equation}
since \(E_\alpha'(0)=0\). Using the result \eqref{HO21}, we find
\begin{equation}
\langle\chi_r|H_a|\psi_\alpha\rangle
=-\frac{\chi_r'(0)}{\sqrt{N_\alpha}}.
\label{HO23}    
\end{equation}
We now make the following observation: 
this is the same result obtained from a  direct (perhaps naive) formal distributional expansion, 
\[
H_a=H_{0}-\lambda\delta(x-a)
=H_0+a\lambda\delta'(x)-\frac{a^2\lambda}{2}\delta''(x)+\cdots,
\label{HO24}
\]
provided the sign convention in \eqref{HO2} is used. Therefore, the first-order time-dependent perturbation generated by the moving center has the non-zero matrix element
\begin{equation}
\delta H_{r\alpha}^{(1)}(t)
=-a(t)\frac{\chi_r'(0)}{\sqrt{N_\alpha}}.
\label{HO25}    
\end{equation}
All even-even and odd-odd first-order matrix elements vanish by parity:
\begin{equation}
\delta H_{\beta\alpha}^{(1)}(t)=0,
\qquad
\delta H_{rs}^{(1)}(t)=0.
\label{HO26}    
\end{equation}
Thus, starting initially in the point-sector state \(\psi_\alpha\), the first-order amplitude to the odd oscillator state \(\chi_r\) is
\begin{equation}
 c_r^{(1)}(t)
=-\frac{\ii}{\hbar}\int_0^t
\delta H_{r\alpha}^{(1)}(t')
\exp\!\left[\frac{\ii}{\hbar}(\mathcal E_r-E_\alpha^*)t'\right]\dd t' \;.
\label{HO27}    
\end{equation}
Substituting \eqref{HO25} into the above, we obtain
\begin{equation}
c_r^{(1)}(t)
=\frac{\ii\chi_r'(0)}{\hbar\sqrt{N_\alpha}}
\int_0^t a(t')
\exp\!\left[\frac{\ii}{\hbar}(\mathcal E_r-E_\alpha^*)t'\right]\dd t' \;.
\label{HO28}    
\end{equation}
For \(a(t)=\varepsilon\cos\Omega t\), define $\omega_{r\alpha}=\frac{\mathcal E_r-E_\alpha^*}{\hbar}$. Then
\begin{equation}
 c_r^{(1)}(t)
=\frac{\varepsilon\chi_r'(0)}{2\hbar\sqrt{N_\alpha}}
\left[
\frac{\e^{\ii(\omega_{r\alpha}+\Omega)t}-1}{\omega_{r\alpha}+\Omega}
+\frac{\e^{\ii(\omega_{r\alpha}-\Omega)t}-1}{\omega_{r\alpha}-\Omega}
\right].
\label{HO29}    
\end{equation}
The resonant denominators show the expected driving frequencies
\(\Omega\simeq |\omega_{r\alpha}|\).

We can write the expansion in powers of the small displacement as
\begin{equation}
\delta H(t)=a(t) \mathcal{W}^{(1)}+a^2(t) \mathcal{W}^{(2)}+\cdots \;,
\label{HO30}    
\end{equation}
where
\begin{equation}
\mathcal{W}_{r\alpha}^{(1)}=-\frac{\chi_r'(0)}{\sqrt{N_\alpha}},
\qquad
\mathcal{W}_{\beta\alpha}^{(1)}=0,
\qquad
\mathcal{W}_{rs}^{(1)}=0 \;.
\label{HO31}    
\end{equation}
The operator \(W^{(2)}\) is the second derivative coefficient of the moving-center Hamiltonian.  In the direct distributional notation it corresponds to
$\mathcal{W}^{(2)}=-\frac{\lambda}{2}\delta''(x)$, while in the Green-function notation it is obtained by the same second-order kernel expansion used in the moving-center section above.  For diagonal point-sector elements it must reproduce \eqref{HO15}:
\begin{equation}
\mathcal{W}_{\alpha\alpha}^{(2)}
=-\frac{\partial_{a}^2 G_0 (0,0|E_\alpha^*)}{2N_\alpha} \;.
\label{HO33}    
\end{equation}
By parity, \(\mathcal{W}^{(2)}\) is even.  Therefore it couples even states to even states and odd states to odd states, but not even states to odd states:
$\mathcal{W}_{r\alpha}^{(2)}=0$. Consequently, if the initial state is \(\psi_\alpha\), the second-order amplitude to an even point-sector state \(\psi_\beta\) has two contributions:
$c_\beta^{(2)}(t)=c_{\beta,{\rm direct}}^{(2)}(t)+c_{\beta,{\rm iter}}^{(2)}(t)$. The direct second-order contribution is
\begin{equation}
 c_{\beta,{\rm direct}}^{(2)}(t)
=-\frac{\ii}{\hbar} \mathcal{W}_{\beta\alpha}^{(2)}
\int_0^t a(t')^2
\exp\!\left[\frac{\ii}{\hbar}(E_\beta^*-E_\alpha^*)t'\right]\dd t' \;.
\label{HO36}    
\end{equation}
The iterated contribution through intermediate odd oscillator states is
\begin{eqnarray}
\begin{aligned}
 c_{\beta,{\rm iter}}^{(2)}(t)
=&-\frac{1}{\hbar^2}
\sum_{r=0}^{\infty}
\mathcal{W}_{\beta r}^{(1)} \mathcal{W}_{r\alpha}^{(1)}
\int_0^t\dd t_1\,a(t_1)
\e^{\frac{\ii}{\hbar}(E_\beta^*-\mathcal E_r)t_1}
\cr
&\hspace{3.0cm}\times
\int_0^{t_1}\dd t_2\,a(t_2)
\e^{\frac{\ii}{\hbar}(\mathcal E_r-E_\alpha^*)t_2} \;.
\end{aligned}
\label{HO37}
\end{eqnarray}
Using \eqref{HO31}, we get
\begin{equation}
\mathcal{W}_{\beta r}^{(1)} \mathcal{W}_{r\alpha}^{(1)}
=\frac{\chi_r'(0)^2}{\sqrt{N_\beta N_\alpha}} \;.
\label{HO38}    
\end{equation}
Hence
\begin{eqnarray}
    \begin{aligned}
 c_{\beta,{\rm iter}}^{(2)}(t)
=&-\frac{1}{\hbar^2\sqrt{N_\beta N_\alpha}}
\sum_{r=0}^{\infty}\chi_r'(0)^2
\int_0^t\dd t_1\,a(t_1)
\e^{\frac{\ii}{\hbar}(E_\beta^*-\mathcal E_r)t_1}
\cr
&\hspace{2.8cm}\times
\int_0^{t_1}\dd t_2\,a(t_2)
\e^{\frac{\ii}{\hbar}(\mathcal E_r-E_\alpha^*)t_2} \;.
\end{aligned}
\label{HO39}
\end{eqnarray}
This is the explicit second-order analogue of the manuscript's time-dependent perturbation formula.  It shows the selection rule very clearly:
\[
\text{even }\psi_\alpha \xrightarrow{\; \mathcal{W}^{(1)}\;} \text{odd }\chi_r
\xrightarrow{\;\mathcal{W}^{(1)}\;} \text{even }\psi_\beta.
\]
For the harmonic oscillation \(a(t)=\varepsilon\cos\Omega t\), both time integrals in \eqref{HO36}--\eqref{HO39} are elementary sums of exponentials.  For example,
$\int_0^t a(t')^2\e^{\ii\omega t'}\dd t'
=\varepsilon^2\int_0^t\cos^2(\Omega t')\e^{\ii\omega t'}\dd t'$ with \eqref{intcos2}. Thus the direct second-order term contains frequencies \(0\) and \(2\Omega\), whereas the iterated term contains two ordered insertions of the first-order frequency \(\Omega\).

\begin{remark}
This explicit oscillator example confirms the general moving-center formulas in a transparent setting:
\begin{enumerate}
\item the energy derivative and the coordinate derivative are distinct;
\item at the symmetric center, the first coordinate derivative of the principal function vanishes by parity;
\item therefore \(\Delta_\alpha^{(1)}=0\), while \(\Delta_\alpha^{(2)}\neq0\) in general;
\item the first-order transition is not zero, but it goes from even point-sector states to odd oscillator states;
\item second order returns to the even sector through either the direct even operator \(\mathcal{W}^{(2)}\) or two successive odd first-order insertions.
\end{enumerate}
This is the method developed in the manuscript: expand the moving support, identify the first- and second-order Hamiltonian pieces, and then insert them into the time-dependent perturbation equations.    
\end{remark}

\begin{remark}
   Although the perturbative expansion of one dimensional problem agrees with our method, there is a fundamental issue. The expansion produces derivatives of the delta interaction, and it is known that such terms require careful analysis  of domain issues, or from a physical point of view renormalization. So strictly speaking these are not small perturbations, our approach seems to be a proper way to formulate this problem.
\end{remark}

\section{Conclusions}
\label{sec:conclusion}

We have formulated time-dependent perturbation theory for renormalized point interactions in settings where the unperturbed Hamiltonian has purely discrete spectrum.  The key point is that the singular Hamiltonian should not be perturbed at the formal level.  Instead, the perturbation is introduced through the renormalized data: the principal function, the shifted poles of the resolvent, and the associated rank-one projections.

For a time-dependent renormalized strength, the pole shifts are determined by the scalar equation $\Phi(E_k(t),\mu(t))=0$.  The first two corrections are given by equations \eqref{eq:Delta1-mu} and \eqref{eq:Delta2-mu}.  The induced Hamiltonian perturbation is not merely the diagonal energy shift; it also contains derivatives of the spectral projections.  These projection derivatives vanish in the diagonal expectation value at first order but generate the off-diagonal transition amplitudes.

For a moving point interaction, the same structure persists with additional derivatives with respect to the support point.  The expansion naturally separates the variation of the pole from the variation of the eigenprojection.  In homogeneous situations, such as the two-sphere, the principal function is position independent and the pole motion disappears; the physical effect is then entirely geometric and is encoded in the moving projections.  At symmetry points of the harmonic oscillator, the first-order pole shift similarly vanishes, while transition amplitudes may still be nonzero because the eigenfunctions change under displacement of the impurity.

The one-dimensional harmonic oscillator illustrates the non-renormalized version of the same mechanism.  There the principal function is simply $\Phi=\lambda^{-1}-G_0(a,a|E)$, all diagonal Green-function derivatives are finite, and the coupling $\lambda(t)$ may be varied directly.  For a moving center at the parity-symmetric point $a=0$, the first energy shift vanishes but the shifted even point-interaction state couples at first order to odd oscillator modes.  This provides a useful consistency check on the renormalized two-dimensional calculation.

The formalism can be extended in several directions.  One may treat several moving centers by replacing the scalar principal function with the finite-dimensional principal matrix, include avoided crossings by degenerate perturbation theory, or compare the renormalized point interaction with regularized short-range potentials such as narrow Gaussians.  These extensions are natural because the present formulation is expressed entirely in terms of resolvents, heat kernels, and spectral projections.

\section{Acknowledgements}
O. T. Turgut and F. Erman would like to thank A. Mostafazadeh for valuable discussions. The authors also acknowledge the use of
OpenAI's ChatGPT as an auxiliary tool during the preparation of this
manuscript, especially in completing the examples. Finally, we would like to dedicate this work to Prof M. Znojil on the occasion of his 80th birthday, and wish him many more productive years.

\end{document}